\DeclareMathOperator*{\argmax}{arg\,max}
\newcommand{\poly}{\mathop{\mathrm{poly}}}
\newcommand{\F}{\mathbb{F}}
\newcommand{\E}{\mathbb{E}}
\newcommand{\eps}{\varepsilon}
\newcommand{\T}{\mathcal{T}}
\newcommand{\X}{\mathcal{X}}
\newcommand{\Y}{\mathcal{Y}}
\newcommand{\Z}{\mathcal{Z}}
\title{A Refined Laser Method and Faster Matrix Multiplication}
\begin{document}

\maketitle

\begin{abstract}
The complexity of matrix multiplication is measured in terms of $\omega$, the smallest real number such that two $n\times n$ matrices can be multiplied using $O(n^{\omega+\epsilon})$ field operations for all $\epsilon>0$; the best bound until now is $\omega<2.37287$ [Le Gall'14]. All bounds on $\omega$ since 1986 have been obtained using the so-called laser method, a way to lower-bound the `value' of a tensor in designing matrix multiplication algorithms. The main result of this paper is a refinement of the laser method that improves the resulting value bound for most sufficiently large tensors. Thus, even before computing any specific values, it is clear that we achieve an improved bound on $\omega$, and we indeed obtain the best bound on $\omega$ to date:
$$\omega < 2.37286.$$
The improvement is of the same magnitude as the improvement that [Le Gall'14] obtained over the previous bound [Vassilevska W.'12]. Our improvement to the laser method is quite general, and we believe it will have further applications in arithmetic complexity.
\end{abstract}

\section{Introduction}
Settling the algorithmic complexity of matrix multiplication is one of the most fascinating open problems in theoretical computer science. The main measure of progress on the problem is the exponent $\omega$, defined as the smallest real number for which $n\times n$ matrices over a field can be multiplied using $O(n^{\omega+\eps})$ field operations, for every $\eps>0$. The value of $\omega$ could depend on the field, although the algorithms we discuss in this paper work over any field.
The straightforward algorithm shows that $\omega\leq 3$, and we can see that $\omega\geq 2$ since any algorithm must output $n^2$ entries. In 1969, Strassen \cite{strassen} obtained the first nontrivial upper bound on $\omega$, showing that $\omega<2.81$. Since then, a long series of papers (e.g. \cite{Pan78,BCRL79,Pan80,Sch81,Romani82,cw81as,laser,coppersmith,stothers,v12,legall,cohn2003group,cohn2005group,cohn2013fast}) has developed a powerful toolbox of techniques, culminating in the best bound to date of $\omega<2.37287$.

In this paper, we add one more tool to the toolbox and lower the best bound on the matrix multiplication exponent to $$\omega<2.37286.$$ 

The main contribution of this paper is a new refined version of the {\em laser method} which we then use to obtain the new bound on $\omega$. The laser method (as coined by Strassen \cite{laser}) is a powerful mathematical technique for analyzing tensors. In our context, it is used to lower bound the ``value'' of a tensor in designing matrix multiplication algorithms. The laser method also has applications beyond bounding $\omega$ itself, including to other problems in arithmetic complexity like computing the ``asymptotic subrank'' of tensors~\cite{Alman19}, and to problems in extremal combinatorics like constructing tri-colored sum-free sets~\cite{kleinberg}. We believe our improved laser method may have other diverse applications.

We will see that our new method achieves better results than the laser method of prior work when applied to almost any sufficiently large tensor, including most of the tensors which arise in prior bounds on $\omega$. In fact, unlike in other recent work, it is clear before running any code that our new method yields a new improved bound on $\omega$.

The last several improvements to the best bound on $\omega$ have been modest. Most recently, Le Gall~\cite{legall} brought the upper bound on $\omega$ from $2.37288$~\cite{v12} to $2.37287$, and we bring it down to $2.37286$. (More precisely, we prove $\omega < 2.3728596$.)
A recent line of work has shown that only modest improvements can be expected if one continues using similar techniques. All fast matrix multiplication algorithms since 1986 use the laser method applied to the Coppersmith-Winograd family of tensors \cite{coppersmith}. The techniques can also be simulated within the group theoretic method of Cohn and Umans \cite{cohn2003group,cohn2005group,cohn2013fast}. 
It is known that the laser method, even our refined version of it, when applied to powers of the particular Coppersmith-Winograd tensor $CW^{\otimes 32}_5$ that achieves the current best bounds on $\omega$, cannot achieve $\omega<2.3725$ \cite{ambainis}. 
Furthermore, a sequence of several papers \cite{ambainis,sunfl,blasiak,blasiak2017groups,almanitcs,aw2,Alman19,ChristandlVZ19,blaser2020slice,wojtala2023irreversibility,blasiak2023matrix} has given strong limitations on the power of the laser method, the group theoretic method and their generalizations, showing that for many natural families of tensors, even approaches substantially more general than the laser method would not be able to prove that $\omega=2$.

That said, the known limitation results are very specific to the tensors they are applied to, and so it is not ruled out that our improved laser method could be applied to a different family of tensors to yield even further improved bounds on $\omega$.
Even in the case of the smaller Coppersmith-Winograd tensor $CW_1$, the known limitations are weaker, and it is not ruled out that one could achieve $\omega<2.239$ using it (see \cite[Table 1]{ambainis}).

\subsection{The Laser Method and Our Improvement}

In this subsection, we give an overview of our improvement to the laser method. We assume familiarity with basic notions related to tensors and matrix multiplication; unfamiliar readers may want to read Section~\ref{sec:prelims} first.

Fast matrix multiplication algorithms since the 1980s have been designed by making use of a cleverly-chosen intermediate tensor $T$. They have consisted of two main ingredients:
\begin{enumerate}
    \item An algorithm for efficiently computing $T$ (i.e., a proof that $T$ has low asymptotic rank $\tilde{R}(T)$), and
    \item A proof that $T$ has a high \emph{value} for computing matrix multiplication (i.e., a restriction of $T^{\otimes n}$ into a large direct sum of matrix multiplication tensors).
\end{enumerate}
Since the work of Coppersmith and Winograd~\cite{coppersmith}, the fastest matrix multiplication algorithms have used $T=CW_q$, the \emph{Coppersmith-Winograd tensor}. Coppersmith and Winograd showed that the asymptotic rank of $CW_q$ is as low as possible given its dimensions. Hence, subsequent work has focused on improving the bound on the value of $CW_q$, and this is the approach we take as well.

The primary way that past work has bounded the values of tensors like $CW_q$ is using the \emph{laser method}. The laser method was so-named by Strassen~\cite{strassenlaser1}, and then further developed by Coppersmith and Winograd~\cite{coppersmith} into its current form. We now describe the laser method at a high level.

Consider a tensor $T$ over finite variables sets $X,Y,Z$, given by $$T = \sum_{x \in X} \sum_{y \in Y} \sum_{z \in Z} a_{xyz} \cdot x y z$$ for some coefficients $a_{xyz} \in \F$ from the underlying field $\F$. Let us partition the variable sets as $X = X_1 \cup \cdots \cup X_{k_X}$, $Y = Y_1 \cup \cdots \cup Y_{k_Y}$, and $Z = Z_1 \cup \cdots \cup Z_{k_Z}$. Hence, letting $$T_{ijk} = \sum_{x \in X_i} \sum_{y \in Y_j} \sum_{z \in Z_k} a_{xyz} \cdot x y z$$ be the subtensor of $T$ restricted to $X_i,Y_j,Z_k$, we have $$T = \sum_{i=1}^{k_X} \sum_{j=1}^{k_Y} \sum_{k=1}^{k_Z} T_{ijk}.$$

Suppose for now that each $T_{ijk}$ is either $0$ or else a matrix multiplication tensor; this will simplify the presentation here but is not needed in the general setting. Hence, $T$ is a sum of matrix multiplication tensors. The typical way to obtain matrix multiplication algorithms from a sum of matrix multiplication tensors, however, requires the sum to be a {\em direct sum}, and $T$ is not a direct sum in general.
One could zero-out many of the $X_i$, $Y_j$, and $Z_k$ parts until $T$ is a direct sum of the remaining $T_{ijk}$ subtensors, but this typically removes `too much' from the tensor. 

The laser method instead takes the following approach. First, we pick a probability distribution $\alpha$ on the nonzero subtensors of $T$, which assigns probability $\alpha_{ijk}$ to $T_{ijk}$. Next, we take a large Kronecker power $n$ of $T$: $$T^{\otimes n} = \sum_{(T_1, T_2, \ldots, T_n) \in \{T_{ijk} \mid i \in [k_X], j \in [k_Y], k \in [k_Z]\}^n} T_1 \otimes T_2 \otimes \cdots \otimes T_n.$$ The goal of the laser method is to zero-out variables in $T^{\otimes n}$ so that what remains is a direct sum of $B$ different subtensors of the form $T_1 \otimes T_2 \otimes \cdots \otimes T_n$ which are \emph{consistent with $\alpha$}, meaning, for each $i \in [k_X], j \in [k_Y], k \in [k_Z]$, we have $|\{ \ell \in [n] \mid T_\ell = T_{ijk}\}| = \alpha_{ijk} \cdot n$. Each subtensor $T_1 \otimes T_2 \otimes \cdots \otimes T_n$ is itself a matrix multiplication tensor, so this is a desired direct sum. 

How large can we hope for $B$ to be? One upper bound is in terms of the \emph{marginals} of $\alpha$. For each $i \in [k_X]$, write $\alpha_{X_i} := \sum_{j \in [k_Y], k \in [k_Z]} \alpha_{ijk}$. Each $T_1 \otimes T_2 \otimes \cdots \otimes T_n$ which is consistent with $\alpha$ uses the variables from a set $X_{a_1} \times X_{a_2} \times \cdots \times X_{a_n}$ where, for each $i \in [k_X]$, we have $|\{ \ell \in [n] \mid a_\ell = i\}| = \alpha_{X_i} \cdot n$. The number of such sets is given by the multinomial coefficient $$\binom{n}{\alpha_{X_1}n, \alpha_{X_2}n, \ldots, \alpha_{X_{k_X}}n}.$$
Since each of these sets can be used by at most one of the final $B$ subtensors, this multinomial coefficient upper bounds $B$. The laser method shows that if $T$ and $\alpha$ satisfy some additional conditions, then roughly this bound on $B$ can actually be achieved!

One of these conditions, which is the focus of our new improvement, is on the marginals of $\alpha$. Let $D_\alpha$ be the set of probability distributions $\beta$ with the same marginals as $\alpha$ (i.e., with $\alpha_{X_i} = \beta_{X_i}$ for all $i \in [k_X]$, $\alpha_{Y_j} = \beta_{Y_j}$ for all $j \in [k_Y]$, and $\alpha_{Z_k} = \beta_{Z_k}$ for all $k \in [k_Z]$). The laser method only achieves the aforementioned value of $B$ if there are no other probability distributions $\beta$ in $D_\alpha$. This is intuitively because the laser method is only zeroing-out sets of variables, and so it cannot distinguish between two distributions which have the same marginals. This is not an issue when analyzing smaller powers of $CW_q$ as in~\cite{coppersmith,stothers}, since there the tensors and partitions are small enough that the linear system defining $D_\alpha$ has full rank. However, it becomes an issue when analyzing larger powers of $CW_q$ as in~\cite{v12,legall}.

The prior work dealt with this issue in a greedy way: Suppose that after the zeroing outs described above, we are left with a direct sum of $B$ subtensors consistent with $\alpha$, plus $m \cdot B$ other subtensors which are consistent with other distributions in $D_\alpha$. We can repeatedly pick a subtensor $S$ consistent with $\alpha$, and zero-out roughly $m$ subtensors which aren't consistent with $\alpha$ until $S$ no longer shares variables with any remaining subtensors. We can then keep $S$ as an independent subtensor, but we may have zeroed out roughly $m$ other subtensors consistent with $\alpha$ in the process. We repeat until only subtensors consistent with $\alpha$ remain. Hence, the old approach leaves us with a direct sum of $$\Theta\left(\frac{B}{m}\right)$$ subtensors consistent with $\alpha$.

In this work, we present a new way to deal with this issue, which improves the number of subtensors consistent with $\alpha$ at the end of the laser method to $$\Theta\left( \frac{B}{\sqrt{m}}\right).$$ 
This directly improves the final value bound achieved by the laser method by a factor of $\Theta(\sqrt{m})$. Since most of the applications of the laser method in the previous best bounds on $\omega$~\cite{v12,legall} apply it in settings where $m \gg 1$ is large enough to reduce the final value, it is evident \emph{even before running any code or computing any specific values} that our improvement on the laser method leads to an improved bound on $\omega$.

Similar to other steps of the laser method, our new construction is probabilistic. We show that if one picks a \emph{random} subset of roughly $B / \sqrt{m}$ subtensors consistent with $\alpha$, and zeroes out all variables which aren't used by any of them, then there is a nonzero probability that all other subtensors are zeroed out.

It is worth asking whether the factor of $\sqrt{m}$ can be further improved. We give evidence that an improvement is not possible by constructing tensors for which our new probabilistic argument is tight up to low-order factors.  Our constructed tensors even share special properties with the tensors that the laser method is usually applied to (they are ``free'').
That said, we leave open the possibility of improving the $\sqrt m$ bound for the specific tensors to which the laser method ultimately applies our probabilistic argument. 

We present our new probabilistic argument for dealing with distributions $\beta \in D_\alpha$ other than $\alpha$ in Section~\ref{sec:newidea}, and then we show how to incorporate it into the laser method in Section~\ref{sec:proof}. We then get into the details of actually applying the laser method to $CW_q$ to achieve our new bound on $\omega$: In Section~\ref{sec:heuristics} we discuss the computational problem of applying the laser method to a given tensor, and some algorithms and heuristics for solving it, and in Section~\ref{sec:final-answer} we detail how to apply the laser method to $CW_q$ specifically. Our new bound of $\omega<2.3728596$ is achieved by applying our refined laser method to $CW_5^{\otimes 32}$, the same tensor used by~\cite{legall}.

Our primary new contribution is the improved factor of $\sqrt{m}$ in the laser method, but we do add some new heuristics to the optimization framework of~\cite{v12,legall} for applying the laser method to $CW_q$ as well. Although many of the ideas in our proof are similar to past work, we nonetheless give all the details, and we have written the body of the paper assuming little prior knowledge from the reader. We recommend the reader go through the sections in order, as the notation needed to apply the laser method is built up throughout.

\subsection{Other Related Work}

\paragraph*{Rectangular Matrix Multiplication}

We do not specifically address algorithms for rectangular matrix multiplication in this paper. Since the best known algorithms for rectangular matrix multiplication also make use of the laser method, our techniques can be used to design faster rectangular matrix multiplication algorithms as well. That said, the computational issues which arise when bounding the running time of rectangular matrix multiplication are even more severe than when bounding $\omega$, and in fact the best known algorithms only use the $4$th Kronecker power of $CW_q$~\cite{legallrect2}, where our improved laser method does not yet kick in. 

\paragraph*{Lower Bounds for Matrix Multiplication}

As mentioned earlier, there are a number of different limitation results showing that certain techniques cannot be used to prove $\omega=2$. However, there is little known in the way of \emph{unconditional} lower bounds against matrix multiplication. Raz~\cite{raz2002complexity} showed that the restricted class of arithmetic circuits with \emph{bounded coefficients} for $n \times n \times n$ matrix multiplication over $\mathbb{C}$ requires size $\Omega(n^2 \log n)$. Another line of work \cite{strassen1983rank,lickteig1984note,burgisser2011geometric,landsberg2015new,landsberg20172} has shown border rank lower bounds for matrix multiplication tensors. The current best bound by Landsberg and Micha{\l}ek~\cite{landsberg20172} shows that the $n \times n \times n$ matrix multiplication tensor over $\mathbb{C}$ has border rank at least $2n^2 - \log_2(n) - 1$. Relatedly, the best known bound in a line of work on rank lower bounds~\cite{blaser19995,shpilka2003lower,landsberg2014new} shows that the $n \times n \times n$ matrix multiplication tensor over $\mathbb{C}$ has rank at least $3n^2 - o(n^2)$~\cite{landsberg2014new}.

\paragraph*{Subsequent Algorithms for Matrix Multiplication}

After the preliminary version of this paper, subsequent work~\cite{duan2023faster,williams2024new} designed a further improved matrix multiplication algorithm, achieving $\omega < 2.371552$. The key idea behind these improvements is a new \emph{asymmetric} way to apply the laser method to $CW_q$ and its subtensors. Interestingly, their analysis, and particularly a new observation called ``combination loss''~\cite{duan2023faster}, appears quite specific to the tensor $CW_q$, and it is unclear if the same approach would yield improvements for other tensors to which the laser method applies. It also appears difficult to use this asymmetric approach in conjunction with our refined laser method. Roughly, the new zeroing outs of the refined laser method may interfere with other subtensors which are intended to be kept in the asymmetric approach; we refer the reader to \cite[End of Section 2.2]{williams2024new} for a more technical discussion.

\section{Preliminaries} \label{sec:prelims}

\subsection{Notation}

For a positive integer $n$, we write $[n] := \{1,2,3,\ldots,n\}$. For a set $S$ and nonnegative integer $k$, we write $\binom{S}{k} := \{ T \subseteq S \mid |T|=k\}$. For a set $S$, positive integer $d$, index $i \in [d]$, and vector $x \in S^d$, we write $x_i$ for entry $i$ of $x$. Any other uses of subscripts should be clear from context.

For nonnegative integers $a_1, \ldots, a_k$ with $a_1 + \cdots + a_k = n$, we write $\binom{n}{[a_i]_{i \in [k]}} := \binom{n}{a_1,a_2,\ldots,a_k}$ for the multinomial coefficient. One standard bound on multinomial coefficients that we will frequently make use of is that, if $p_1, \ldots, p_k \in [0,1]$ sum to $p_1 + \cdots + p_k = 1$, then for sufficiently large positive integers $n$ such that $p_i \cdot n$ is an integer for all $i \in [k]$, we have that $$\binom{n}{[p_i \cdot n]_{i \in [k]}} = \left( \prod_{i=1}^k p_i^{-p_i} \right)^{n - o(n)}.$$

\subsection{Tensors and Sums}

Let $\F$ be any field, and $X = \{x_1, \ldots, x_{|X|}\},Y=\{y_1,\ldots,y_{|Y|}\},Z=\{z_1, \ldots, z_{|Z|}\}$ be finite sets (which we will refer to as sets of \emph{variables}). A tensor $T$ over $X,Y,Z$ is a trilinear form $$T = \sum_{i=1}^{|X|} \sum_{j=1}^{|Y|} \sum_{k=1}^{|Z|} a_{ijk} \cdot x_i  y_j  z_k,$$ where the $a_{ijk}$ are coefficients from the field $\F$. We'll focus in particular on tensors $T$ whose coefficients $a_{ijk}$ are all $0$ or $1$, so that $T$ can be thought of as a tensor over any field. Such tensors $T$ can be thought of as subsets of $X \times Y \times Z$. For $x_i \in X, y_j \in Y, z_k \in Z$, we say $x_i  y_j  z_k$ has nonzero coefficient in $T$ if $a_{ijk} \neq 0$.

For tensor $T$ over $X = \{x_1, \ldots, x_{|X|}\},Y=\{y_1,\ldots,y_{|Y|}\},Z=\{z_1, \ldots, z_{|Z|}\}$ and tensor $T'$ over $X' = \{x'_1, \ldots, x'_{|X'|}\},Y'=\{y'_1,\ldots,y'_{|Y'|}\},Z'=\{z'_1, \ldots, z'_{|Z'|}\}$,  given by $$T = \sum_{i=1}^{|X|} \sum_{j=1}^{|Y|} \sum_{k=1}^{|Z|} a_{ijk} \cdot x_i  y_j  z_k, \hspace{30pt} T' = \sum_{i'=1}^{|X'|} \sum_{j'=1}^{|Y'|} \sum_{k'=1}^{|Z'|} b_{i'j'k'} \cdot x'_{i'}  y'_{j'}  z'_{k'},$$ we now describe a number of operations and relations. We will use these two tensors as running notation throughout this section.

If $X=X'$, $Y=Y'$, and $Z=Z'$, then the \emph{sum} $T+T'$ is the tensor whose coefficient of $x_iy_jz_k$ is $a_{ijk} + b_{ijk}$. With our view of tensors as polynomials, this is the usual way to sum $T$ and $T'$.

The \emph{direct sum} $T \oplus T'$ is the sum $T+T'$ over the disjoint unions $X \sqcup X', Y \sqcup Y', Z \sqcup Z'$. In other words, it is the sum $T+T'$ after we first relabel the sets of variables so that they are disjoint.

We say $T$ and $T'$ are \emph{isomorphic}, written $T \equiv T'$, if one can get from one to the other by renaming variables. In other words, they are isomorphic if $|X|=|X'|$, $|Y|=|Y'|$, $|Z|=|Z'|$, and there are bijections $\pi_X : [|X|] \to [|X'|]$, $\pi_Y : [|Y|] \to [|Y'|]$, and $\pi_Z : [|Z|] \to [|Z'|]$ such that $a_{ijk} = b_{\pi_x(i) \pi_y(j) \pi_z(k)}$ for all $i \in [|X|]$, $j \in [|Y|]$, and $k \in [|Z|]$.

The \emph{rotation} of $T$, denoted $T^r$, is the tensor over $Y,Z,X$ such that the coefficient of $y_jz_kx_i$ in $T^r$ is $a_{ijk}$. We similarly write $T^{rr}$ for the corresponding tensor over $Z,X,Y$.

\subsection{Kronecker Products}

The \emph{Kronecker product} $T \otimes T'$ is a tensor over $X \times X', Y \times Y', Z \times Z'$ given by
$$T \otimes T' = \sum_{i=1}^{|X|} \sum_{j=1}^{|Y|} \sum_{k=1}^{|Z|} \sum_{i'=1}^{|X'|} \sum_{j'=1}^{|Y'|} \sum_{k'=1}^{|Z'|} a_{ijk} \cdot b_{i'j'k'} \cdot (x_i,x'_{i'}) \cdot (y_j,y'_{j'}) \cdot (z_k,z'_{k'}).$$ One can think of $T \otimes T'$ as multiplying $T$ and $T'$ as polynomials, but then `merging' together pairs of $x$-variables, pairs of $y$-variables, and pairs of $z$-variables into single variables.

For a positive integer $n$, we write $T^{\otimes n} := T \otimes T \otimes T \otimes \cdots \otimes T$ ($n$ times) for the \emph{Kronecker power} of $T$, which is a tensor over $X^n, Y^n, Z^n$. For $I \in [|X|]^n$, we will write $x_I$ to denote the element $(x_{I_1}, x_{I_2}, \ldots, x_{I_n}) \in X^n$, and similarly for $Y^n,Z^n$, so that
$$T^{\otimes n} = \sum_{I \in [|X|]^n} \sum_{J \in [|Y|]^n} \sum_{K \in [|Z|]^n} \left( \prod_{\ell=1}^n a_{I_\ell J_\ell K_\ell}  \right) \cdot x_I  y_J  z_K.$$

\subsection{Tensor Rank}

We say $T$ has rank $1$ if we can write $$T = \left( \sum_{i=1}^{|X|} \alpha_i \cdot x_i \right) \cdot \left( \sum_{j=1}^{|Y|} \beta_j \cdot y_j \right) \cdot \left( \sum_{k=1}^{|Z|} \gamma_k \cdot z_k \right)$$ for some values $\alpha_i, \beta_j, \gamma_k \in \F$. Equivalently, $T$ has rank $1$ if there exist $\alpha_i, \beta_j, \gamma_k \in \F$  such that $a_{ijk} = \alpha_i \cdot \beta_j \cdot \gamma_k$ for all $i,j,k$. The rank of a tensor $T$, denoted $R(T)$, is the minimum nonnegative integer such that there are rank $1$ tensors $T_1, \ldots, T_{R(T)}$ over $X,Y,Z$ with $T_1 + \cdots + T_{R(T)} = T$. This is analogous to the rank of a matrix. We call the sum $T_1 + \cdots + T_{R(T)}$ a \emph{rank $R(T)$ expression} for $T$. For tensors $T,T'$, rank satisfies the basic properties:
\begin{itemize}
    \item $R(T+T') \leq R(T) + R(T')$, by adding the rank expressions for $T$ and $T'$,
    \item $R(T)=R(T^r)$ by rotating the rank expression, and
    \item $R(T \otimes T') \leq R(T) \cdot R(T')$, by the distributive property, since one can verify that the Kronecker product of two rank $1$ tensors is also a rank $1$ tensor. 
\end{itemize}
This third property inspires the definition of the \emph{asymptotic rank} $\tilde{R}(T)$ of $T$, given by $$\tilde{R}(T) = \lim_{n \to \infty} (R(T^{\otimes n}))^{1/n}.$$ By Fekete's lemma, $\tilde{R}(T)$ is well-defined, and is upper-bounded by $(R(T^{\otimes m}))^{1/m}$ for any fixed positive integer $m$. As we will see, there are many tensors $T$ for which $R(T) > \tilde{R}(T)$, and this is crucial in the study of matrix multiplication algorithms.

\subsection{Matrix Multiplication Tensors}

For positive integers $a,b,c$, the $a \times b \times c$ matrix multiplication tensor, written $\langle a,b,c \rangle$, is a tensor over $\{x_{ij}\}_{i \in [a], j \in [b]}$, $\{y_{jk}\}_{j \in [b], k \in [c]}$, $\{z_{ki}\}_{k \in [c], i \in [a]}$, given by $$\langle a,b,c \rangle = \sum_{i \in [a]} \sum_{j \in [b]} \sum_{k \in [c]} x_{ij} y_{jk} z_{ki}.$$
Note that $\langle a,b,c \rangle^r \equiv \langle b,c,a \rangle$. The tensor $\langle a,b,c \rangle$ is the trilinear form which one evaluates when multiplying an $a \times b$ matrix with a $b \times c$ matrix. In other words, for $A \in \F^{a \times b}$ and $B \in \F^{b \times c}$, if we substitute the $(i,j)$ entry of $A$ for $x_{ij}$ and the $(j,k)$ entry of $B$ for $y_{jk}$, then the resulting coefficient of $z_{ki}$ in $\langle a,b,c \rangle$ is the $(i,k)$ entry of the matrix product $A \times B$.

One can verify that for any positive integers $a,b,c,d,e,f$ we have $\langle a,b,c \rangle \otimes \langle d,e,f \rangle \equiv \langle ad, be, cf \rangle$. This corresponds to the fact that block matrices can be multiplied by appropriately multiplying and adding together blocks.

For any positive integers $q,r$, if $R(\langle q,q,q \rangle) = r$, then one can use the corresponding rank expression to design an arithmetic circuit for $n \times n \times n$ matrix multiplication of size $O(n^{\log_q(r)})$. This follows from the recursive approach introduced by Strassen~\cite{strassen}; see e.g. \cite[Proposition~1.1, Theorem~5.2]{blaser}. The exponent of matrix multiplication, $\omega$, is hence defined as $$\omega := \inf_{q \in \mathbb{N}} \log_q R(\langle q,q,q \rangle).$$ Thus, using Strassen's recursive approach, for every $\eps>0$, there is an arithmetic circuit for $n \times n \times n$ matrix multiplication of size $O(n^{\omega + \eps})$. For instance, Strassen showed that $R(\langle 2,2,2 \rangle) \leq 7$, which implied $\omega \leq \log_2(7) < 2.81$. Since $\langle q,q,q \rangle^{\otimes n} \equiv \langle q^n,q^n,q^n \rangle$, we can equivalently write that, for any fixed integer $q \geq 2$, $$\omega = \log_q \tilde{R}(\langle q,q,q \rangle).$$
The value of $\omega$ could depend\footnote{In fact, it is known that $\omega$ depends only on the characteristic of $\F$~\cite{Sch81}.} on the field $\F$, although the bounds we give in this paper work over any field. Note that bounding the rank of a rectangular matrix multiplication tensor can also yield bounds on $\omega$: if $R(\langle a,b,c \rangle) \leq r$, then by symmetry, $R(\langle b,c,a \rangle) \leq r$ and $R(\langle c,a,b \rangle) \leq r$, and so taking the Kronecker product of the three, we see $R(\langle abc,abc,abc \rangle) \leq r^3$ which yields $\omega \leq 3 \log_{abc} r$.

\subsection{Sch{\"o}nhage's Asymptotic Sum Inequality}

By definition, in order to upper bound $\omega$, it suffices to upper bound the (asymptotic) rank of some matrix multiplication tensor. Sch{\"o}nhage~\cite{Sch81} showed that it also suffices to upper bound the (asymptotic) rank of a \emph{direct sum} of multiple matrix multiplication tensors.

\begin{theorem}[\cite{Sch81}]\label{thm:schonhage}
Suppose there are positive integers $r>m$, and $a_i,b_i,c_i$ for $i \in [m]$, such that the tensor $$T = \bigoplus_{i=1}^m \langle a_i,b_i,c_i \rangle$$ has $\tilde{R}(T) \leq r$. Then, $\omega \leq 3 \tau$, where $\tau \in [2/3,1]$ is the solution to $$\sum_{i=1}^m (a_i \cdot b_i \cdot c_i)^\tau = r.$$
\end{theorem}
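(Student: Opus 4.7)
The plan is to establish the equivalent inequality $\sum_i (a_ib_ic_i)^{\omega/3} \le r$, which by the monotonicity of $\tau \mapsto \sum_i (a_ib_ic_i)^\tau$ is equivalent to the stated bound $\omega \le 3\tau$. I would start by taking a large Kronecker power of $T$. By the definition of asymptotic rank, for any $\eps > 0$ and sufficiently large $N$, $R(T^{\otimes N}) \le (r+\eps)^N$; and expanding the Kronecker power of the direct sum by multilinearity yields
$$T^{\otimes N} \equiv \bigoplus_{\mathbf{e}\,:\,\sum_i e_i = N} \binom{N}{\mathbf{e}} \langle A_\mathbf{e}, B_\mathbf{e}, C_\mathbf{e}\rangle, \qquad A_\mathbf{e} = \prod_i a_i^{e_i},\ B_\mathbf{e} = \prod_i b_i^{e_i},\ C_\mathbf{e} = \prod_i c_i^{e_i}.$$

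Next, I would isolate a single summand tuned to an auxiliary probability distribution $\mathbf{p}$ on $[m]$ (to be optimized later). Setting $\mathbf{e}^* = (p_i N)_i$ (with the usual rounding / divisibility tricks), zeroing out the variables of every other summand shows that $U := \binom{N}{\mathbf{e}^*} \cdot \langle A_{\mathbf{e}^*}, B_{\mathbf{e}^*}, C_{\mathbf{e}^*}\rangle$ is a direct summand of $T^{\otimes N}$, so $R(U) \le (r+\eps)^N$. Symmetrizing via Kronecker product with the two rotations of $U$ gives $U \otimes U^r \otimes U^{rr} \equiv \binom{N}{\mathbf{e}^*}^3 \cdot \langle M, M, M\rangle$, where $M = A_{\mathbf{e}^*} B_{\mathbf{e}^*} C_{\mathbf{e}^*} = \prod_i (a_ib_ic_i)^{p_i N}$; this is a direct sum of $\binom{N}{\mathbf{e}^*}^3$ copies of a cube matrix-multiplication tensor, of rank at most $(r+\eps)^{3N}$.

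The hard part will be converting this rank bound on many copies of a single cube matrix-multiplication tensor into a bound involving $\omega$. Although the direct sum conjecture fails for arbitrary tensors, for matrix-multiplication tensors a padding-and-Kronecker-powering argument (the core of Sch{\"o}nhage's technique) allows one to conclude that, asymptotically,
$$\binom{N}{\mathbf{e}^*}^3 \cdot M^\omega \;\le\; (r+\eps)^{3N(1+o(1))},$$
i.e.\ each of the $\binom{N}{\mathbf{e}^*}^3$ copies effectively contributes its full $\omega$-weight to the overall rank. This is the technical heart of the argument; the main insight is that the very definition of $\omega$ as a matrix-multiplication exponent forces this asymptotic additivity for direct sums of matrix-multiplication tensors, even though it does not hold for direct sums of general tensors.

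The last step is asymptotic analysis and optimization. Applying the multinomial estimate $\binom{N}{\mathbf{e}^*} = \prod_i p_i^{-p_i N - o(N)}$ recorded in the preliminaries and taking logarithms, the previous display becomes (after dividing by $3N$ and rearranging)
$$\sum_i p_i \log\!\left(\frac{(a_ib_ic_i)^{\omega/3}}{p_i}\right) \;\le\; \log r + o(1).$$
The left-hand side is concave in $\mathbf{p}$ and, by Lagrange multipliers, is maximized at $p_i^* = (a_ib_ic_i)^{\omega/3}/\sum_j(a_jb_jc_j)^{\omega/3}$, where its value equals $\log \sum_j (a_jb_jc_j)^{\omega/3}$. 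Choosing this $\mathbf{p}^*$ in the isolation step and letting $N \to \infty$ and $\eps \to 0$ yields $\sum_j (a_jb_jc_j)^{\omega/3} \le r$, which is the desired inequality.
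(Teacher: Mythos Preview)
The paper does not supply its own proof of this theorem; it is quoted as a black box from Sch\"onhage~\cite{Sch81}. So there is nothing in the paper to compare your argument against.

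That said, your outline is the standard route to Sch\"onhage's $\tau$-theorem and is correct as a plan. A couple of remarks on the step you flag as the ``hard part'': the concrete mechanism you need is the recursion that turns a rank-$R$ bilinear algorithm for $K$ independent copies of $\langle m,m,m\rangle$ into an algorithm for a single large $\langle m^s,m^s,m^s\rangle$. Each of the $R$ bilinear products at the top level is, when the entries are themselves $m^{s-1}\times m^{s-1}$ matrices, a single instance of $\langle m^{s-1},m^{s-1},m^{s-1}\rangle$; grouping these $R$ instances into $\lceil R/K\rceil$ batches of $K$ and recursing gives
\[
R(\langle m^s,m^s,m^s\rangle)\;\le\;\lceil R/K\rceil^{\,s-1}\cdot R,
\]
hence $m^\omega\le \lceil R/K\rceil$ and so $K\,m^\omega\le R+K$. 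In your application $K=\binom{N}{\mathbf e^*}^3$ and $R=(r+\eps)^{3N}$ are both exponential in $N$ with $K\le R$, so the additive $K$ is harmless after taking $3N$-th roots; this is exactly the $(1+o(1))$ in your displayed inequality. With that lemma in hand, your optimization over $\mathbf p$ (Gibbs/entropy maximization at $p_i\propto (a_ib_ic_i)^{\omega/3}$) and the passage to $\sum_i(a_ib_ic_i)^{\omega/3}\le r$ are exactly right.

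One minor point: the equivalence you open with relies on $\tau\mapsto\sum_i(a_ib_ic_i)^\tau$ being nondecreasing, which holds because each $a_ib_ic_i\ge 1$; you might state this explicitly. Otherwise the plan is complete.
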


\subsection{Zeroing-Outs, Restrictions, and Value}

We say $T'$ is a \emph{restriction} of $T$ if there is an $\F$-linear map $A : \textsf{span}_\F(X) \to \textsf{span}_\F(X')$, which maps $\F$-linear combinations of variables of $X$ to $\F$-linear combinations of variables of $X'$, and similarly $\F$-linear maps $B : \textsf{span}_\F(Y) \to \textsf{span}_\F(Y')$, and $C : \textsf{span}_\F(Z) \to \textsf{span}_\F(Z')$, such that  $$T' = \sum_{i=1}^{|X|} \sum_{j=1}^{|Y|} \sum_{k=1}^{|Z|} a_{ijk} \cdot A(x_i) \cdot B(y_j) \cdot C(z_k).$$
It is not hard to verify that if $T'$ is a restriction of $T$, then $R(T) \geq R(T')$, and $\tilde{R}(T) \geq \tilde{R}(T')$, since the restriction of a rank $1$ tensor is still a rank $1$ tensor. Recent progress on bounding $\omega$ has worked by cleverly picking a tensor $T$, showing that $\tilde{R}(T)$ is `small', and showing that a `large' direct sum of matrix multiplication tensors is a restriction of a power $T^{\otimes n}$. We will follow this approach as well.

In fact, we will only use a limited type of restriction called a \emph{zeroing out}. We say $T'$ is a zeroing out of $T$ if $X' \subseteq X$, $Y' \subseteq Y$, $Z' \subseteq Z$, and the coefficient of $x_i y_j z_k$ is the same in $T$ and $T'$ for every $x_i \in X'$, $y_j \in Y'$, and $z_k \in Z'$. In this case, we write $T' = T|_{X',Y',Z'}$. We say the variables in $X \setminus X'$, $Y \setminus Y'$, and $Z \setminus Z'$ have been \emph{zeroed-out}; one can think of substituting in $0$ for those variables in $T$ to get to $T'$.

Coppersmith and Winograd~\cite{coppersmith} formalized this approach to bounding $\omega$ by defining the \emph{value} of a tensor. For $\tau \in [2/3,1]$, the $\tau$-value of $T$, written $V_\tau(T)$, is given by the supremum over all positive integers $n$, and all tensors of the form $\bigoplus_{i=1}^m \langle a_i,b_i,c_i \rangle$ which are restrictions of $(T \otimes T^r \otimes T^{rr})^{\otimes n}$, of $$\left( \sum_{i=1}^m (a_i \cdot b_i \cdot c_i)^\tau \right)^{\frac{1}{3n}}.$$ When $\tau$ is clear from context, we will simply write $V(T)$ and call it the value of $T$. One can see that for tensors $T,T'$, the value $V_\tau$ satisfies $V_\tau(T\otimes T')\geq V_\tau(T) \cdot V_\tau(T')$ and $V_\tau(T\oplus T')\geq V_\tau(T)+V_\tau(T')$. We can also see that $V_\tau(\langle a,b,c \rangle) = (abc)^\tau$. We work with the tensor $T \otimes T^r \otimes T^{rr}$ instead of just $T$ in the definition of $V_\tau(T)$ since this more symmetric form can sometimes substantially increase the value of relatively `asymmetric' tensors.

By Theorem~\ref{thm:schonhage}, we get almost immediately that for any tensor $T$, and any $\tau \in [2/3,1]$, if $V_\tau(T) \geq \tilde{R}(T)$, then $\omega \leq 3 \tau$. Thus, in this paper, we focus on lower-bounding the values of certain tensors. We will give a recursive approach where the value of a tensor with certain structure can be bounded in terms of the values of its subtensors.

\subsection{Coppersmith-Winograd Tensors}

For a nonnegative integer $q$, the Coppersmith-Winograd tensor $CW_q$ is a tensor over $\{x_0,\ldots,x_{q+1}\}$, $\{y_0, \ldots, y_{q+1}\}$, $\{z_0, \ldots, z_{q+1}\}$ given by $$CW_q := x_0 y_0 z_{q+1} + x_0 z_{q+1} y_0 + x_{q+1} y_0 z_0 + \sum_{i=1}^q (x_0 y_i z_i + x_i y_0 z_i + x_i y_i z_0).$$ 
Notice in particular that $$\langle 1,1,q \rangle \equiv \sum_{i=1}^q x_0 y_i z_i, \hspace{20pt} \langle q,1,1 \rangle \equiv \sum_{i=1}^q x_i y_0 z_i, \hspace{20pt} \langle 1,q,1 \rangle \equiv \sum_{i=1}^q x_i y_i z_0,$$
so $CW_q$ is the sum of three matrix multiplication tensors and three `corner terms' (which are also $\langle 1,1,1 \rangle$ matrix multiplication tensors). 
Coppersmith and Winograd~\cite{coppersmith} showed\footnote{In fact, they showed that the \emph{`border rank'} of $CW_q$ is $\leq q+2$, and border rank is known to upper bound asymptotic rank~\cite{bini1980border}. The fact that $\tilde{R}(CW_q) \geq q+2$ follows since $CW_q$ is a `concise' tensor; see, e.g.,~\cite[Remark 14.38]{burgisser2013algebraic}.} that $\tilde{R}(CW_q) = q+2$. The upper bounds on $\omega$ since Coppersmith and Winograd's work~\cite{coppersmith,stothers,v12,legall} have all been proved by giving value lower bounds for $CW_q$. Our improvement in this paper will come from further improving these value bounds.

\subsection{Salem-Spencer Sets}

The final technical ingredient from past work that we will need is a construction of large subsets of $\mathbb{Z}_M$ which avoid three-term arithmetic progressions.

\begin{theorem}[\cite{salemspencer,behrend1946sets}] \label{thm:SS}
For every positive integer $M$, there is a subset $A \subseteq \mathbb{Z}_M$ of size $|A| \geq M \cdot e^{-O(\sqrt{\log M})}$ such that any $a,b,c \in A$ satisfy $a+b=2c \pmod{M}$ if and only if $a=b=c$.
\end{theorem}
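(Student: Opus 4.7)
The plan is to use Behrend's construction of three-term-progression-free sets via lattice points on a sphere, which gives size $M^{1-o(1)}$, and then optimize the parameters to get the claimed $e^{-O(\sqrt{\log M})}$ loss factor. The two key ingredients are (i) strict convexity of the squared Euclidean norm, which guarantees that lattice points on a sphere avoid nontrivial three-term APs in $\mathbb{Z}^d$, and (ii) a base-$(2n)$ digit encoding, which transports this AP-freeness from $\mathbb{Z}^d$ into $\mathbb{Z}_M$ without creating carries.

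Concretely, I would fix two parameters $d, n$ (to be optimized) with $(2n)^d \leq M$, and consider $C = \{0,1,\ldots,n-1\}^d \subseteq \mathbb{Z}^d$. Partition $C$ by squared $\ell_2$-norm: $C = \bigsqcup_{r=0}^{d(n-1)^2} S_r$, where $S_r = \{x \in C : \sum_i x_i^2 = r\}$. Since there are at most $dn^2$ nonempty parts and $|C|=n^d$, pigeonhole yields some $r^\star$ with $|S_{r^\star}| \geq n^{d-2}/d$. For any $a,b,c \in S_{r^\star}$ satisfying $a+b=2c$ in $\mathbb{Z}^d$, the parallelogram law gives $\|a-c\|_2^2 + \|b-c\|_2^2 = 0$, forcing $a=b=c$. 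Now define $\phi : C \to \mathbb{Z}_M$ by $\phi(x) = \sum_{i=1}^d x_i (2n)^{i-1} \bmod M$; since coordinate sums satisfy $a_i+b_i \in [0,2n-1]$, the equation $\phi(a)+\phi(b) \equiv 2\phi(c) \pmod M$ lifts (no carries, no mod-$M$ wrap because $(2n)^d \leq M$) to the componentwise identity $a+b=2c$ in $\mathbb{Z}^d$. Therefore $A := \phi(S_{r^\star}) \subseteq \mathbb{Z}_M$ is injective, has $|A| \geq n^{d-2}/d$, and is three-AP-free.

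To optimize, choose $n = \lfloor M^{1/d}/2 \rfloor$, which satisfies $(2n)^d \leq M$. Then
\[
|A| \;\geq\; \frac{n^{d-2}}{d} \;\geq\; \frac{M}{4n^2 \cdot d} \cdot \frac{1}{n^0} \;\gtrsim\; \frac{M}{d \cdot 2^{(2/d)\log_2 M + d}}.
\]
Setting $d = \lceil \sqrt{\log_2 M} \rceil$ balances the two terms in the exponent of the denominator, giving a denominator of $e^{O(\sqrt{\log M})}$, and hence $|A| \geq M \cdot e^{-O(\sqrt{\log M})}$, as desired.

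The genuinely nontrivial piece is the sphere-convexity argument, but it is a one-line consequence of the parallelogram identity, so there is no real obstacle; the rest is bookkeeping. The place one must be careful is ensuring the base-$(2n)$ encoding is truly additive on the range of interest, i.e.\ that neither carries in $\mathbb{Z}$ nor reduction modulo $M$ can spoil the equation $\phi(a)+\phi(b)\equiv 2\phi(c)$ before we lift it. Both are avoided by the choice $(2n)^d \leq M$ together with coordinates in $[0,n-1]$. The parameter optimization is the only slightly fiddly step, and yields exactly the $e^{-O(\sqrt{\log M})}$ loss required by the statement.
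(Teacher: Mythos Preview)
The paper does not prove this theorem; it is stated as a classical result with citations to Salem--Spencer and Behrend and is used as a black box. Your proposal reproduces the standard Behrend construction, and the argument is correct: the sphere trick (via the identity $\|a-c\|^2+\|b-c\|^2 = 2\|a\|^2+2\|c\|^2-2\langle a+b,c\rangle = 4r-4r=0$ when $a+b=2c$) and the base-$2n$ encoding with $(2n)^d\le M$ are exactly the right ingredients, and your observation that $\phi(a)+\phi(b)$ and $2\phi(c)$ both lie in $[0,(2n)^d)\subseteq[0,M)$ is what legitimizes lifting the congruence to an integer equality.

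One small quibble: the displayed chain $\frac{n^{d-2}}{d}\ge \frac{M}{4n^2\cdot d}$ is not literally true, since $n\le M^{1/d}/2$ gives only $n^d\le M/2^d$, not $n^d\ge M/4$. What you actually have is $n\ge M^{1/d}/4$ (for $M$ large enough relative to $d$), hence $n^{d-2}/d \ge M/(4^{d}\, n^{2}\, d) \ge M/(4^{d}\, M^{2/d}\, d)$. This still yields a denominator of the form $2^{O(d+(\log M)/d)}$, so the choice $d\asymp\sqrt{\log M}$ gives the claimed $M\cdot e^{-O(\sqrt{\log M})}$. The sloppiness is cosmetic and does not affect the conclusion.
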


We briefly mention a `tensor interpretation' of Theorem~\ref{thm:SS}. For odd prime $M$, define the tensor $C_M$ over $\{x_0, \ldots, x_{M-1}\}$, $\{y_0, \ldots, y_{M-1}\}$, $\{z_0, \ldots, z_{M-1}\}$ by $$C_M := \sum_{i=0}^{M-1} \sum_{j=0}^{M-1} x_i \cdot y_j \cdot z_{(i+j)/2 \pmod{M}}.$$ Letting $A$ be the set from Theorem~\ref{thm:SS}, if we zero-out all $x_i$, $y_i$, and $z_i$ for which $i \notin A$ in $C_M$, then the result is the tensor $$\sum_{i \in A} x_i \cdot y_i \cdot z_{i},$$ which is a direct sum of $|A|\geq M \cdot e^{-O(\sqrt{\log M})}$ terms.

\section{Diagonalizing Arbitrary Tensors with Zeroing Outs} \label{sec:newidea}

We now present a main new technical tool which we will later use in proving value lower bounds for tensors. It can be thought of as a generalization of Theorem~\ref{thm:SS} to tensors beyond just $C_M$. The resulting bound we get is not as large (it yields $\Omega(\sqrt{M})$ instead of $M \cdot e^{-O(\sqrt{\log M})}$ for $C_M$), although we show later in Theorem~\ref{thm:opt1} and Theorem~\ref{thm:opt2} that such a loss is required for this more general statement. The proof uses the probabilistic method.

\begin{theorem} \label{thm:newidea}
Suppose $T$ is a tensor over $X,Y,Z$, with partitions $X = X_1 \cup X_2 \cup \cdots \cup X_{n}$, $Y = Y_1 \cup Y_2 \cup \cdots \cup Y_{n}$, $Z = Z_1 \cup Z_2 \cup \cdots \cup Z_{n}$ for some positive integer $n$. For $i,j,k \in [n]$, write $T_{ijk} := T|_{X_i, Y_j, Z_k}$. Let $S = \{(i,j,k) \in [n]^3 \mid T_{ijk} \neq 0 \}$, and suppose that:
\begin{itemize}
    \item $(i,i,i) \in S$ for all $i \in [n]$, and
    \item for all other $(i,j,k) \in S$, the three values $i,j,k$ are distinct.
\end{itemize}
Write $m := (|S|-n)/n$, and suppose that $m \geq 1$. Then, there is a subset $I \subseteq [n]$ of size $|I| \geq \frac{2n}{3 \sqrt{3m}}$ such that $T$ has a zeroing out into $\sum_{i \in I} T_{iii}$.
\end{theorem}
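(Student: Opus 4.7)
The plan is to prove the theorem by the probabilistic method combined with alteration. The key observation is that we want to find $I \subseteq [n]$ such that restricting the variable sets to $\bigcup_{i \in I} X_i$, $\bigcup_{i \in I} Y_i$, and $\bigcup_{i \in I} Z_i$ leaves only the diagonal subtensors $T_{iii}$ for $i \in I$ and kills every off-diagonal subtensor. The surviving subtensors of such a zeroing out are exactly those $T_{ijk}$ with $i, j, k \in I$, so the requirement reduces to ensuring that no off-diagonal triple $(i,j,k) \in S$ (where by hypothesis $i,j,k$ are distinct) has all three coordinates inside $I$.

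First I would sample a random $J \subseteq [n]$ by including each index independently with probability $p$, for a value of $p$ to be optimized later. Then $\E[|J|] = pn$, and for each of the $mn$ off-diagonal triples $(i,j,k) \in S$ (there are this many because $m = (|S|-n)/n$), the probability that $\{i,j,k\} \subseteq J$ is exactly $p^3$, since the three coordinates are distinct and sampled independently. Letting $B$ denote the number of off-diagonal triples fully contained in $J$, linearity of expectation gives $\E[B] = mn \cdot p^3$.

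Next I would apply alteration: for each off-diagonal triple $(i,j,k) \in S$ with $\{i,j,k\} \subseteq J$, remove one of its three indices from $J$, producing a subset $I$. This deletes at most $B$ elements while guaranteeing that no off-diagonal triple in $S$ lies inside $I$. Therefore
\[
\E[|I|] \;\geq\; \E[|J|] - \E[B] \;=\; pn - mnp^3.
\]
Choosing $p = 1/\sqrt{3m}$ (which lies in $(0,1]$ since $m \geq 1$) maximizes this quantity and yields $\E[|I|] \geq \frac{2n}{3\sqrt{3m}}$, so by the probabilistic method some outcome $I$ achieves this size. For such $I$, the zeroing out $T|_{\bigcup_{i \in I} X_i,\, \bigcup_{i \in I} Y_i,\, \bigcup_{i \in I} Z_i}$ equals $\sum_{i \in I} T_{iii}$, as required.

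There is no serious obstacle here — the only nontrivial design choice is the value of $p$, and the rest follows immediately from linearity of expectation. The one point that deserves care is that off-diagonal triples have three genuinely distinct coordinates, so their inclusion event has probability $p^3$ rather than $p^2$ or $p$; this is guaranteed by the second hypothesis of the theorem, and is precisely what makes the improved factor $\sqrt{m}$ appear (rather than $m$, as in the greedy argument from prior work).
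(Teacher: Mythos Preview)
Your proposal is correct and essentially identical to the paper's proof: both sample each index independently with probability $p=1/\sqrt{3m}$, then delete one coordinate from each off-diagonal triple fully contained in the sample, obtaining $\E[|I|]\ge pn - mnp^3 = \frac{2n}{3\sqrt{3m}}$. The only cosmetic difference is that the paper always removes the first coordinate of each bad triple, whereas you remove an arbitrary one.
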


\begin{proof}
We use the probabilistic method. Let $p := \frac{1}{\sqrt{3m}}$, and let $R$ be a random subset of $[n]$ where each element is included independently with probability $p$, so $\E[|R|] = p \cdot n$. Define $S' \subseteq S$ by $$S' = \{ (i,j,k) \in S \mid i\neq j \neq k \neq i \text{ and } i,j,k \in R\}.$$ Any given $(i,j,k) \in S$ with $i\neq j \neq k \neq i$ is included in $S'$ with probability $p^3$, and so $\E[|S'|] = p^3 \cdot (|S|-n) = p^3 \cdot n \cdot m$. Let $A := \{ i \mid (i,j,k) \in S'\}$, and let $I := R \setminus A$. We have that
$$\E[|I|] \geq \E[|R|] - \E[|A|] \geq \E[|R|] - \E[|S'|] = p \cdot n - p^3 \cdot n \cdot m = n\cdot (p - p^3 m) = \frac{2n}{3 \sqrt{3m}}.$$
It follows that there is a choice of randomness with $|I| \geq \frac{2n}{3 \sqrt{3m}}$. Fix this choice, then let $T'$ be $T$ after zeroing-out every $X_j,Y_j,$ and $Z_j$ such that $j \notin I$; we claim this is the desired zeroing out. Evidently $T_{iii}$ is not zeroed out for any $i\in I$. Meanwhile, for any other $(i,j,k) \in S$, it must be that $T_{ijk}$ was zeroed out, i.e., at least one of $i,j,k$ is not in $I$, since either at least one of $i,j,k$ is not in $R$, in which case it would not be included in $I$, or else $i$ would have been included in $A$ and hence excluded from $I$.
\end{proof}

Theorem~\ref{thm:newidea} shows how to start with a tensor $T$ which is a direct sum $\bigoplus_{i=1}^n T_{iii}$ plus roughly $m \cdot n$ additional subtensors $T_{ijk}$, and zero-out some variables so that $\Omega(n / \sqrt{m})$ of the $T_{iii}$ tensors remain, but all other subtensors are zeroed out. We will use this fact in the proof of Theorem~\ref{thm:main} below to show that the \emph{value} of $T$ is at least $V_\tau(T) \geq \Omega\left(\frac{n}{\sqrt{m}} \cdot \min_{i \in [n]} V_\tau(T_{iii})\right)$. It is natural to ask whether this $\sqrt{m}$ dependence is optimal; we next construct some tensors for which it is.

We first give a technical ingredient, Theorem~\ref{thm:opt1}, which we will use in our tensor construction.

\begin{theorem} \label{thm:opt1}
For positive integers $n \geq m$ with $n$ sufficiently large, there is a subset $S \subseteq \{ (i,j,k) \in [n]^3 \mid i,j,k \text{ distinct}\}$ with $|S|=mn$ such that, for any subset $I \subseteq [n]$ of size $|I| = \frac{n \log n}{\sqrt{m}}$, there is an $(i,j,k) \in S$ with $i,j,k \in I$. 
\end{theorem}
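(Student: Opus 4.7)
The plan is to prove this by the probabilistic method: pick $S$ to be a uniformly random $mn$-element subset of $\mathcal{T} := \{(i,j,k) \in [n]^3 \mid i,j,k \text{ distinct}\}$, and show that with positive probability, every $I \subseteq [n]$ of size $k := \lfloor n \log n / \sqrt{m}\rfloor$ contains some triple of $S$. If $k > n$ the statement is vacuous, so we may assume $m \geq \log^2 n$, which ensures $k \leq n$ and $mn \leq |\mathcal{T}|$.

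The first step is to bound the failure probability for a single $I$. The number of triples in $\mathcal{T}$ with all coordinates in $I$ is $k(k-1)(k-2) \geq k^3/2$ for $n$ large. Computing the hypergeometric tail gives
\[
\Pr[S \cap I^3 = \emptyset] \;=\; \frac{\binom{|\mathcal{T}| - k(k-1)(k-2)}{mn}}{\binom{|\mathcal{T}|}{mn}} \;\leq\; \left(1 - \frac{k(k-1)(k-2)}{|\mathcal{T}|}\right)^{mn} \;\leq\; \exp\!\left(-\frac{c\, m k^3}{n^2}\right)
\]
for some absolute $c > 0$, using $|\mathcal{T}| \leq n^3$. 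Substituting $k = n \log n/\sqrt{m}$ makes the exponent equal to $-c\, n (\log n)^3 / \sqrt{m}$.

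The second step is the union bound over all $\binom{n}{k}$ choices of $I$. Here I would use $\binom{n}{k} \leq (en/k)^k$, and since $m \leq n$ gives $en/k = e\sqrt{m}/\log n \leq e\sqrt{n}/\log n$, this yields
\[
\log \binom{n}{k} \;\leq\; k \log(en/k) \;\leq\; k \cdot \left(\tfrac{1}{2}\log n + O(1)\right) \;=\; O\!\left(\frac{n (\log n)^2}{\sqrt{m}}\right).
\]
Since the per-$I$ failure exponent $\Theta(n(\log n)^3/\sqrt{m})$ beats the log-union-bound contribution $O(n(\log n)^2/\sqrt{m})$ by a factor of $\log n$, the total failure probability is $\exp(-\Omega(n(\log n)^3/\sqrt{m})) = o(1)$ as $n \to \infty$. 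Therefore some fixed $S$ achieves the property, proving the theorem.

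The main thing to get right is the balance between the two exponents in the union bound: the cubic dependence on $k$ in the per-$I$ calculation (three free coordinates) versus the linear dependence on $k$ in $\log\binom{n}{k}$. This gap is exactly what permits and forces the $\log n$ overhead that appears in the statement, and explains why one could not hope to prove the result with $|I| = n/\sqrt{m}$ directly by this method. Everything else is routine: elementary bounds on binomial and multinomial coefficients.
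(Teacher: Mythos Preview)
Your argument is correct. It differs from the paper's proof in method, though the underlying arithmetic is the same.

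The paper builds $S$ greedily: it maintains the collection $C$ of all $k$-subsets $I$ not yet ``hit'' by a triple in $S$, and at each step adds a triple $(i,j,k)$ that hits the largest number of remaining $I\in C$. An averaging argument (pick $i,j,k$ uniformly at random and compute the expected fraction of $I$ hit) shows each step shrinks $|C|$ by a factor at most $1-\tfrac12(\log n/\sqrt m)^3$, so after $mn$ steps $|C|<1$. You instead draw all of $S$ at once uniformly at random and apply a union bound over the $\binom{n}{k}$ choices of $I$. Both arguments hinge on the identical comparison: the ``coverage exponent'' $mn\cdot(\log n/\sqrt m)^3=\Theta(n\log^3 n/\sqrt m)$ beats $\log\binom{n}{k}=O(n\log^2 n/\sqrt m)$ by a $\log n$ factor. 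Your version is slightly more direct; the paper's greedy presentation makes the set-cover structure explicit, but neither yields an efficient construction since $|C|$ is exponential.
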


\begin{proof}
Let $C$ be the set of subsets $I \subseteq [n]$ of size $\frac{n \log n}{\sqrt{m}}$. Hence, $$|C| = \binom{n}{\frac{n \log n}{\sqrt{m}}} \leq n^{\frac{n \log n}{\sqrt{m}}}.$$ Initially let $S = \emptyset$. We will repeatedly add an element $(i,j,k)$ to $S$, and then remove any remaining $I \in C$ with $i,j,k \in I$, until $C$ becomes empty. It suffices to show we only need to add $\leq mn$ elements to $S$.
 
At each step, we simply greedily pick any $(i,j,k) \in [n]^3$ with $i,j,k$ distinct which maximizes $|\{ I \in C \mid i,j,k \in I \}|$. Note that if we pick three random distinct $i,j,k \in [n]$, then for a given $I \in C$, the probability that $i,j,k \in I$ is $$\frac{|I|}{n} \cdot \frac{|I|-1}{n-1} \cdot \frac{|I|-2}{n-2} > \left( \frac{|I|-2}{n-2} \right)^3 = \left( \frac{\frac{n \log n}{\sqrt{m}}-2}{n-2} \right)^3 > \frac12 \left( \frac{\log n}{\sqrt{m}} \right)^3$$ for large enough $n$. It follows that we can pick $i,j,k$ which multiply $|C|$ by a factor which is less than $1 - \frac12 \left( \frac{ \log n}{\sqrt{m}} \right)^3$. After repeating $nm$ times, the resulting size of $C$ will be less than 
$$n^{\frac{n}{\sqrt{m}}} \cdot \left( 1 - \frac12 \left( \frac{\log n}{\sqrt{m}} \right)^3 \right)^{mn} < n^{\frac{n}{\sqrt{m}}} \cdot \left( \frac{1}{e} \right)^{\frac12 n \log^3 n / \sqrt{m}} < 1,$$ for large enough $n$. Since $|C|$ is an integer, it follows that $|C|=0$ as desired.
\end{proof}

Let $S \subseteq [n]^3$ be the set from Theorem~\ref{thm:opt1}, with $|S| = m \cdot n$, and define the tensor $T$ over $\{x_1, \ldots, x_n\}$, $\{y_1, \ldots, y_n\}$, $\{z_1, \ldots, z_n\}$ by $$T = \sum_{(i,j,k) \in S} x_i y_j z_k.$$ Theorem~\ref{thm:opt1} says that, for any $I \subseteq [n]$ such that $T$ has a zeroing out into $\sum_{i \in I} x_iy_iz_i$, we must have $|I| < \frac{n \log n}{\sqrt{m}}$. This is nearly the size of the set $I$ constructed by Theorem~\ref{thm:newidea}.

The tensors to which we will apply Theorem~\ref{thm:newidea} will have additional structure beyond those stipulated by Theorem~\ref{thm:newidea}. It is worth investigating whether the $\sqrt{m}$ factor in Theorem~\ref{thm:newidea} can be improved for those tensors in particular. One particular property is that they are \emph{free}, meaning, for any $i,j,k,i',j',k' \in [n]$, such that $x_i y_j z_k$ and $x_{i'}y_{j'}z_{k'}$ both have nonzero coefficients in $T$, at most one of $i=i'$,$j'=j$,$k=k'$ holds. We can see that the tensor constructed from Theorem~\ref{thm:opt1} is unlikely to be free. Nonetheless, with some additional work, we can construct a free tensor for which the $\sqrt{m}$ factor is still optimal:

\begin{theorem} \label{thm:opt2}
For positive integers $n$ and $m$, with $n$ sufficiently large and $\log^2 n \leq m\leq \sqrt{n/6}$, there is a subset $S \subseteq \binom{[n]}{3}$ with $|S|\leq O(mn)$ such that
\begin{enumerate}[(1)]
    \item any distinct $\{i,j,k\},\{i',j',k'\} \in S$, have $|\{i,j,k\} \cap \{i',j',k'\}| \leq 1$, and
    \item for any subset $I \subseteq [n]$ of size $|I| = \frac{n \log (n)}{\sqrt{m}}$, there is an $\{i,j,k\} \in S$ with $i,j,k \in I$. 
\end{enumerate}
\end{theorem}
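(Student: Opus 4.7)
The plan is to adapt the greedy construction from Theorem~\ref{thm:opt1} while additionally enforcing the linearity condition (1). Write $s := \lceil n\log(n)/\sqrt{m} \rceil$ and let $\mathcal{C}_0$ denote the family of all $s$-subsets of $[n]$, so $|\mathcal{C}_0| = \binom{n}{s} \leq \exp(s\log n) = \exp(n\log^2 n/\sqrt{m})$. I build $S$ iteratively: at step $t+1$, choose an \emph{allowed} triple $\tau \in \binom{[n]}{3}$ (meaning $|\tau \cap T'| \leq 1$ for every $T' \in S_t$) that maximizes $|\{I \in \mathcal{C}_t : \tau \subseteq I\}|$, add $\tau$ to $S_t$, and remove all covered $I$'s from $\mathcal{C}_t$. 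Terminating when $\mathcal{C}_t$ is empty produces an $S$ satisfying (1) and (2) by construction; the remaining task is to upper bound the number of iterations by $O(mn)$.

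The heart of the argument is a per-$I$ double count. For each $I \in \mathcal{C}_t$, let $F_t^I \subseteq \binom{I}{3}$ denote the forbidden triples inside $I$, namely those sharing $2$ vertices with some $T' \in S_t$. First I would verify $|F_t^I| \leq 3(s-2)|S_t|$: any $T' \in S_t$ with $|T' \cap I| \in \{2,3\}$ contributes at most $3(s-2)$ such triples, while $T'$ with $|T' \cap I| \leq 1$ contributes none. Summing,
\[
\sum_{I \in \mathcal{C}_t} |F_t^I| \leq 3(s-2)|S_t| \cdot |\mathcal{C}_t|,
\]
and since the total incidence $\sum_{T \in \binom{[n]}{3}} |\{I \in \mathcal{C}_t : T \subseteq I\}| = |\mathcal{C}_t|\binom{s}{3}$, the allowed-triple incidence is at least $|\mathcal{C}_t|(\binom{s}{3} - 3(s-2)|S_t|) \geq |\mathcal{C}_t|\binom{s}{3}/2$ provided $|S_t| \leq s^2/36$. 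Dividing by $\binom{n}{3}$ (an upper bound on the number of allowed triples) shows the greedy $\tau$ covers
\[
|\{I \in \mathcal{C}_t : \tau \subseteq I\}| \geq \frac{|\mathcal{C}_t|\binom{s}{3}}{2\binom{n}{3}},
\]
so $|\mathcal{C}_{t+1}| \leq |\mathcal{C}_t|(1 - \binom{s}{3}/(2\binom{n}{3}))$.

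Iterating this shrinkage gives $|\mathcal{C}_k| \leq \exp(s\log n - k\binom{s}{3}/(2\binom{n}{3}))$, which falls below $1$ once $k = O(s\log n \cdot n^3/s^3) = O(nm/\log n) = O(mn)$, as desired. The assumption $|S_t| \leq s^2/36$ stays self-consistent throughout: it reduces to $nm/\log n \lesssim n^2\log^2 n/m$, i.e.\ $m^2 \lesssim n\log^3 n$, which follows from $m \leq \sqrt{n/6}$ for sufficiently large $n$ (the hypothesis $m \geq \log^2 n$ enters only in ensuring $s \leq n$).

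The main obstacle I expect is keeping the coverage estimate effective as $|\mathcal{C}_t|$ shrinks. A naive global count $|F_t| \leq 3n|S_t|$ would bound the forbidden contribution by something independent of $|\mathcal{C}_t|$, and hence dominate $|\mathcal{C}_t|\binom{s}{3}$ once $|\mathcal{C}_t|$ is a small fraction of $\binom{n}{s}$, breaking the greedy argument in the endgame. Passing to the per-$I$ bound $|F_t^I| \leq 3(s-2)|S_t|$ and summing gives a forbidden contribution that itself scales with $|\mathcal{C}_t|$, which is precisely what keeps the multiplicative shrinkage alive all the way down to $\mathcal{C}_t = \emptyset$.
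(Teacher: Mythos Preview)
Your proof is correct, and it takes a genuinely different route from the paper's. The paper uses a random construction with alteration: it works on a universe of size $N=2n$, includes each triple independently with probability $\Theta(m/N^2)$, and then argues (via Markov) that the expected number of pairs of triples sharing two vertices is $O(m^2)$; deleting one vertex from each such pair and using $m \le \sqrt{n/6}$ shrinks the universe by at most $n$, after which one truncates back to $[n]$. Your approach is instead a deterministic greedy cover in the spirit of Theorem~\ref{thm:opt1}, with the linearity constraint maintained throughout; the new idea is the per-$I$ bound $|F_t^I| \le 3(s-2)|S_t|$ on forbidden triples, which keeps the forbidden incidence proportional to $|\mathcal{C}_t|$ and hence preserves the multiplicative shrinkage all the way down. (Note that in fact $|T'\cap I|=3$ cannot occur for $I\in\mathcal{C}_t$, so your bound is even a little slack.) The paper's argument is shorter and more standard, but your method avoids the doubling-and-deletion trick, is constructive, and actually yields the slightly stronger bound $|S|=O(mn/\log n)$.
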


\begin{proof}
Let $N=2n$, and let $T = \binom{[N]}{3}$. Construct $S \subseteq T$ randomly by including each element of $T$ independently with probability $p := \frac{Nm}{3|T|}$. 

We compute some probabilities related to $S$.

First, note that $\E[|S|] = p \cdot |T| = \frac{Nm}{3}$, and so by Markov's inequality, we have $|S| \leq Nm=2nm$ with probability $\geq 2/3$.

Second, consider any fixed set $I \subseteq [N]$ of size $|I| = \frac{n \log n}{\sqrt{m}}$. 
For any three fixed elements $i,j,k\in I$ which are pairwise distinct, the probability that $\{i,j,k\}$ is not in $S$ is $1-\frac{Nm}{3|T|}\leq 1-\frac{2m}{N^2}$. Thus, for large enough $n$, the probability that no triple of $I$ is in $S$ is at most $$\left(1-\frac{2m}{N^2}\right)^{N^3(\log^3 N)/(16m^{3/2})}\leq 2^{-\Omega(\frac{N\log^3 N}{\sqrt{m}})}.$$

Meanwhile, the number of sets  $I \subseteq [N]$ of size $|I| = \frac{n \log n}{\sqrt{m}}$ is only
$$\binom{2n}{\frac{n \log n}{\sqrt{m}}} \leq  2^{O(\frac{N \log^2 N}{\sqrt{m}})}.$$
Thus the probability that all of those sets $I$ are covered by a triple in $S$ is overwhelming.

Let $U=[N]$ be the original universe.
Now, repeat the following procedure that shrinks $U$ somewhat. If there are two triples $\{i,j,k\},\{i,j,k'\} \in S$ which share two elements, then remove $i$ from $U$ (decreasing the size of $U$ by one, e.g. effectively making $U$ into $[N-1]$ the first time an element is removed).
This removes all subsets $I$ of size $n(\log n)/\sqrt m$ containing $i$ and also removes all triples of $S$ containing $i$.
The remaining subsets $I$ of $U$ of size $n(\log n)/\sqrt m$ are still covered by the remaining triples of $S$ as long as they were before we shrank $U$.

After this greedy procedure there are no more pairs of triples in $S$ that share a pair of elements.
Let us consider the size of $U$ after the greedy procedure.

Let us fix two triples $(i,j,k),(i,j,k')\in T$. The probability that both of them end up in (the original) $S$ is $\frac{N^2 m^2}{9|T|^2}\leq \frac{2m^2}{N^4}$. The number of pairs of triples that share a pair of elements is $\leq N^4$. Thus the expected number of such pairs that end up in $T$ is $\leq 2m^2$.
By Markov's inequality, the probability that there are $> 6m^2$ such pairs in $T$ is $\leq 1/3$. 

Thus, with probability at least $1-1/3-1/3 = 1/3$, the original $S$ had size $\leq mN=2mn$ and we removed $\leq 6m^2$ elements from the universe. Since $m\leq \sqrt{n/6}$, we have removed $\leq N/2$ elements, and so the remaining universe size is at lest $N/2=n$. 
If necessary, remove more elements from $U$ until $|U|=n$, effectively removing the triples of $S$ and subsets of $U$ of size $n(\log n)/\sqrt m$ that contain these elements.

We get that for the remaining $S$,  $|S|\leq O(mn)$, and with high probability, all subsets of $U$ of size $(n/\sqrt{m})\log (n)$ are covered by $S$.
\end{proof}

The requirement in Theorem~\ref{thm:opt2} that $m \leq \sqrt{n/6}$ may seem restrictive, but all the tensors to which we will apply Theorem~\ref{thm:main} have this property. Of course, the tensors to which we will apply Theorem~\ref{thm:newidea} have even more structure still than just being free. We leave open the question of whether Theorem~\ref{thm:newidea} can be further improved for them.

\section{Refined Laser Method} \label{sec:proof}

Let $T$ be a tensor over $X,Y,Z$, with partitions $X = X_1 \cup X_2 \cup \cdots \cup X_{k_X}$, $Y = Y_1 \cup Y_2 \cup \cdots \cup Y_{k_Y}$, $Z = Z_1 \cup Z_2 \cup \cdots \cup Z_{k_Z}$ for some positive integers $k_X,k_Y,k_Z$, and for $(i,j,k) \in [k_X]\times[k_Y]\times[k_Z]$ define $T_{ijk} := T|_{X_i,Y_j,Z_k}$. Let $S := \{ (i,j,k) \in [k_X]\times[k_Y]\times[k_Z] \mid T_{ijk} \neq 0\}$, and suppose there is an integer $P$ such that every $(i,j,k) \in S$ satisfies $i+j+k=P$. We call $T$ along with these partitions a \emph{$P$-partitioned tensor}. Some prior work called $T$ a partitioned tensor whose \emph{outer structure} is the tensor $$\sum_{i \in [k_X], j\in [k_Y], k \in [k_Z], i+j+k=P} x_iy_jz_k.$$

Let $D$ be the set of $\alpha : S \to [0,1]$  such that $\sum_{(i,j,k) \in S} \alpha(i,j,k) = 1$. For each $\alpha\in D$, we define a few quantities. 

First, for $(i,j,k) \in S$, write $\alpha_{ijk} := \alpha(i,j,k)$. 
For $i \in [k_X]$ write $$\alpha_{X_i} := \sum_{j \in [k_Y], k \in [k_Z] \mid (i,j,k) \in S} \alpha_{ijk},$$ and similarly define $\alpha_{Y_j}$ for $j \in [k_Y]$ and $\alpha_{Z_k}$ for $k \in [k_Z]$. Define the three products\footnote{We use the convention $0^0 := 1$.}

$$\alpha_B := \left( \prod_{i \in [k_X]} \alpha_{X_i}^{-\alpha_{X_i}} \right)^{1/3} \cdot \left( \prod_{j \in [k_Y]} \alpha_{Y_j}^{-\alpha_{Y_j}} \right)^{1/3} \cdot \left( \prod_{k \in [k_Z]} \alpha_{Z_k}^{-\alpha_{Z_k}} \right)^{1/3},$$

$$\alpha_N := \prod_{(i,j,k) \in S} \alpha_{ijk}^{-\alpha_{ijk}} , \text{ and}$$
$$\alpha_{V_\tau} := \prod_{(i,j,k) \in S} V_\tau(T_{ijk})^{\alpha_{ijk}} \text{ for } \tau \in [2/3,1].$$

Finally, define $D_\alpha \subseteq D$, the set of $\beta \in D$ which have the same marginals as $\alpha$, by $$D_\alpha := \{ \beta \in D \mid \alpha_{X_i}=\beta_{X_i} ~\forall i \in [k_X],~ \alpha_{Y_j}=\beta_{Y_j} ~\forall j\in [k_Y],~ \alpha_{Z_k}=\beta_{Z_k} ~\forall k \in [k_Z] \}.$$

We will show how to get a lower bound on $V_\tau(T)$ in terms of a given $\alpha \in D$, as follows:
\begin{theorem}[Refined Laser Method] \label{thm:main}
For any tensor $T$ which is a $P$-partitioned tensor, any $\alpha \in D$, and any $\tau \in [2/3,1]$, we have
$$V_\tau(T) \geq \alpha_{V_\tau} \cdot \alpha_B \cdot \sqrt{\frac{\alpha_N}{\max_{\beta \in D_\alpha} \beta_N}}.$$
\end{theorem}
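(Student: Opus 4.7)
The plan is to follow the classical laser method template --- large Kronecker powers of the symmetric form $T \otimes T^r \otimes T^{rr}$, type-based zero-outs to enforce the $\alpha$-marginals, and the Salem-Spencer trick to produce direct-sum structure --- and then, at the final diagonalization step, to replace the greedy zero-out of prior work with our probabilistic construction (Theorem~\ref{thm:newidea}), gaining the claimed $\sqrt{m}$ factor.

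Concretely, I would fix a large integer $N$, pass to $(T \otimes T^r \otimes T^{rr})^{\otimes N}$, and lift the partition of $T$ to a partition indexed by multi-indices in $[k_X]^N$ on the $X$-side (and analogously for $Y, Z$ and their rotations). First I would zero out every multi-index whose histogram does not match the marginals $\alpha_{X_i}$, $\alpha_{Y_j}$, $\alpha_{Z_k}$; standard multinomial asymptotics then show that the number of surviving multi-indices on each side is $\prod_i \alpha_{X_i}^{-\alpha_{X_i} N + o(N)}$ and its cyclic analogues, contributing a factor of $\alpha_B^{3N + o(N)}$ to the eventual count of direct summands. Next I would invoke Theorem~\ref{thm:SS} with a large modulus $M$ and a $3$-AP-free set $A \subseteq \mathbb{Z}_M$, hashing each surviving multi-index by a modular combination of its entries tailored to $\alpha$, and keep only those whose hash lies in $A$. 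The $3$-AP-free property forces every surviving triple $(\tilde I, \tilde J, \tilde K)$ to satisfy $h_X(\tilde I) = h_Y(\tilde J) = h_Z(\tilde K)$, producing a direct-sum decomposition across hash classes at a cost of only a subexponential $e^{O(\sqrt{\log M})}$ factor.

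The crucial new ingredient enters within a single hash class. There, the surviving tensor is a (non-direct) sum over triples $(\tilde I, \tilde J, \tilde K)$ with common hash, correct marginals, and some joint distribution $\beta \in D_\alpha$. After identifying multi-indices of each side with a common index set $[n]$, the $\alpha$-consistent triples appear as the ``diagonal'' subtensors $T_{iii}$ demanded by Theorem~\ref{thm:newidea}; multinomial counts give $n \approx \alpha_N^{N + o(N)}$, while the total number of remaining off-diagonal triples is at most $|D_\alpha| \cdot \max_{\beta \in D_\alpha} \beta_N^{N + o(N)}$, so that $m \leq (\max_{\beta \in D_\alpha} \beta_N / \alpha_N)^{N + o(N)}$ up to a polynomial factor in $N$. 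A careful choice of hashing ensures that every off-diagonal triple uses three pairwise distinct indices in $[n]$, meeting the distinctness hypothesis of Theorem~\ref{thm:newidea}. That theorem then extracts a sub-direct-sum of at least $\Omega(n / \sqrt{m})$ diagonal subtensors, each a Kronecker product whose value (recursively, via $V_\tau$ applied to the subtensors $T_{ijk}$) is at least $\alpha_{V_\tau}^N$.

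Multiplying the outer block count by the diagonal count per block and the per-summand value, taking $(3N)$-th roots as $N \to \infty$, and absorbing all subexponential losses into $o(N)$ exponents, yields
$$V_\tau(T) \geq \alpha_{V_\tau} \cdot \alpha_B \cdot \sqrt{\alpha_N / \max_{\beta \in D_\alpha} \beta_N}.$$
The main obstacle I expect is the setup inside a hash class: the hashing must be designed so that any $\beta \in D_\alpha \setminus \{\alpha\}$ not only separates from $\alpha$ in hash value but in fact produces triples $(\tilde I, \tilde J, \tilde K)$ whose three index representatives in $[n]$ are genuinely pairwise distinct, as Theorem~\ref{thm:newidea} requires. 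This reduces to the fact that any such $\beta$ disagrees with $\alpha$ on some coordinate of $S$, so an appropriately chosen modular weighting forces at least two of the three hashes apart whenever $\beta \neq \alpha$; carrying this out while preserving the multinomial counts is the technical crux. The remaining bookkeeping --- translating ``direct sum of $D$ Kronecker-product subtensors of value $\alpha_{V_\tau}^N$'' into the sum-of-$\tau$-powers form demanded by the definition of $V_\tau$, and the passage to $N \to \infty$ --- is standard.
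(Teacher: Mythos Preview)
Your overall plan mirrors the paper's---marginal zero-outs, Salem--Spencer hashing, then Theorem~\ref{thm:newidea}---but the crucial final setup is not right. Your proposed mechanism for the distinctness hypothesis of Theorem~\ref{thm:newidea} cannot work: you suggest the hash be ``tailored to $\alpha$'' so that $\beta$-consistent triples land at different hash values than $\alpha$-consistent ones, but the hash sees only the block indices (the marginals), and every $\beta \in D_\alpha$ has exactly the same marginals as $\alpha$, so no hash on blocks can distinguish them. The paper obtains distinctness by a different route. It first picks the modulus $M \approx N_\alpha/N_B$, so that after hashing a typical block is hit by about one $\alpha$-consistent triple; then a \emph{greedy} pass zeroes out every block still used by two or more $\alpha$-triples, at a cost of at most $2C_2$ of them. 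Now each surviving block is used by exactly one $\alpha$-triple, so the $\alpha$-triples can be labeled $1,\ldots,n$ and their blocks become the parts $X_i,Y_i,Z_i$ required by Theorem~\ref{thm:newidea}. Distinctness of the off-diagonal triples then comes for free from the $P$-partitioned structure: since $I_\ell+J_\ell+K_\ell=P$ forces the third block once two are fixed, a $\beta$-triple sharing two block-labels with an $\alpha$-triple would have to \emph{be} that $\alpha$-triple. Your proposal omits this greedy step entirely, and without it there is no way to index the $\alpha$-triples as a diagonal.

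You also skip a subtlety the paper handles explicitly: after hashing, one needs a \emph{single} choice of randomness for which the number of surviving $\alpha$-triples (minus the greedy loss) is large \emph{and} the total number of surviving triples is not too large. Separate expectation bounds do not suffice; the paper uses a power-mean inequality (Lemma~\ref{lem:apppowermean}) to extract a choice with $C_1'^{3/2}/C_3^{1/2}$ at least its ``expected'' value, which is precisely the ratio that feeds into the $n/\sqrt{m}$ bound of Theorem~\ref{thm:newidea}. Finally, your counts are off by a factor of three in the exponent (you pass to $(T\otimes T^r\otimes T^{rr})^{\otimes N}$ but then write $n \approx \alpha_N^{N}$ rather than $\alpha_N^{3N}$), though this is cosmetic compared to the structural gap above.
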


By comparison, the bound used by prior work~\cite{v12,legall} was 
$$V_\tau(T) \geq \alpha_{V_\tau} \cdot \alpha_B \cdot \frac{\alpha_N}{\max_{\beta \in D_\alpha} \beta_N}.$$ Our Theorem~\ref{thm:main} improves this by a factor of $\sqrt{\frac{\max_{\beta \in D_\alpha} \beta_N}{\alpha_N}}$, which is a strict improvement whenever there is a $\beta \in D_\alpha$ with $\beta_N > \alpha_N$. As we will see, this is frequently the case in the analysis of powers of $CW_q$ and their subtensors.

Throughout this section, we omit $\tau$ when writing $V_\tau$ and $\alpha_{V_\tau}$, and we will always use the specific $\tau$ in the statement of Theorem~\ref{thm:main}.

\subsection{Proof Plan}
In the remainder of this section, we prove Theorem~\ref{thm:main}. Our proof strategy is as follows. Pick a large positive integer $n$, 
and consider the tensor $\T := T^{\otimes n} \otimes T^{r \otimes n} \otimes T^{rr \otimes n}$. We are going to show that $\T$ can be zeroed out into a direct sum of $$\frac{\alpha_B^{3n - o(n)} \cdot \alpha_N^{1.5n - o(n)}}{\max_{\beta \in D_\alpha} \beta_N^{1.5 n - o(n)}}$$ different tensors, each of which has value $$\alpha_V^{3n},$$ which will imply the bound \begin{align}\label{eq:valboundn}V_\tau(\T) \geq \frac{\alpha_V^{3n} \cdot \alpha_B^{3n - o(n)} \cdot \alpha_N^{1.5n - o(n)}}{\max_{\beta \in D_\alpha} \beta_N^{1.5 n - o(n)}}.\end{align} As $n \to \infty$, this implies the desired bound on $V_\tau(T) = V_\tau(\T)^{1/3n}$.

Our construction is divided into four main steps. The first three are mostly the same as the laser method from past work~\cite{coppersmith,stothers,v12,legall}, except that our analysis in step 3, in which we make use of Salem-Spencer sets, is more involved than in past work as we choose different parameters and need to preserve different properties of our tensor than in previous uses of the laser method. The main novel idea comes in step 4, where we apply our Theorem~\ref{thm:newidea} as a final zeroing-out step which has not appeared in past work.

Before we begin, we make one technical remark: we will assume throughout this proof that $\alpha_{ijk} \cdot n$ is an integer for all $(i,j,k) \in S$. This can be achieved by adding a real number of magnitude at most $1/n$ to each $\alpha_{ijk}$ so that they are all integer multiples of $1/n$, while maintaining that $\sum_{(i,j,k) \in S} \alpha(i,j,k) = 1$. In Appendix~\ref{app:rounding} below, we show that this is possible, and that the resulting changes to $\alpha$ only change the value bound (\ref{eq:valboundn}) by a negligible multiplicative factor between $2^{-o(n)}$ and $2^{o(n)}$. Hence, this `rounding' will not change the final bound in our proof.

\subsection{Step 1: Removing blocks which are inconsistent with \texorpdfstring{$\alpha$}{alpha}}

Let $n$ and $\T := T^{\otimes n} \otimes T^{r \otimes n} \otimes T^{rr \otimes n}$ be as above, and note that $\T$ is a tensor over $\X := X^n \times Y^n \times Z^n, \Y := Y^n \times Z^n \times X^n, \Z := Z^n \times X^n \times Y^n$. For $I \in [k_X]^n$, we write $X_I := \prod_{\ell \in [n]} X_{I_\ell} \subseteq X^n$, so that $X^n$ is partitioned by the $X_I$ for $I \in [k_X]^n$. We similarly define $Y_J$ for $J \in [k_Y]^n$ and $Z_K$ for $K \in [k_Z]^n$. Thus, $\X$ is partitioned by $X_I \times Y_J \times Z_K$ for $(I,J,K) \in [k_X]^n \times [k_Y]^n \times [k_Z]^n$; we call such an $X_I \times Y_J \times Z_K$ an \emph{$\X$-block}, and we similarly define $\Y$-blocks and $\Z$-blocks.

We say $X_I$ for $I \in [k_X]^n$ is \emph{consistent with $\alpha$} if, for all $i \in [k_X]$, we have $$\alpha_{X_i} = \frac{1}{n} \cdot | \{ \ell \in [n] \mid I_\ell = i \} |.$$ We define consistency with $\alpha$ for $Y_J$ for $J \in [k_Y]^n$, and $Z_K$ for $K \in [k_Z]^n$, similarly.

In $\T$, zero-out all $\X$-blocks $X_I \times Y_J \times Z_K$ where at least one of $X_I,Y_J,$ or $Z_K$ is not consistent with $\alpha$. Similarly, zero-out all $\Y$-blocks $Y_J \times Z_K \times X_I$ and all $\Z$-blocks $Z_K \times X_I \times Y_J$ where at least one of $X_I,Y_J,$ or $Z_K$ is not consistent with $\alpha$. Let $\T'$ denote $\T$ after these zeroing outs.

The number of $X_I$ which are consistent with $\alpha$ is $$\binom{n}{[\alpha_{X_i}\cdot n]_{i \in [k_X]}} = \left( \prod_{i \in [k_X]} \alpha_{X_i}^{-\alpha_{X_i}} \right)^{n - o(n)}.$$
Hence, recalling that $$\alpha_B = \left( \prod_{i \in [k_X]} \alpha_{X_i}^{-\alpha_{X_i}} \right)^{1/3} \cdot \left( \prod_{j \in [k_Y]} \alpha_{Y_j}^{-\alpha_{Y_j}} \right)^{1/3} \cdot \left( \prod_{k \in [k_Z]} \alpha_{Z_k}^{-\alpha_{Z_k}} \right)^{1/3},$$ we have that the number $N_B$ of remaining (not zeroed out) $\X$-blocks, $\Y$-blocks, or $\Z$-blocks in $\T'$ is $$N_B = \alpha_B^{3n - o(n)}.$$

\subsection{Step 2: Defining and counting nonzero block triples in \texorpdfstring{$\T'$}{T'}}

For an $\X$-block $B_X = X_I \times Y_{J'} \times Z_{K''}$, $\Y$-block $B_Y = Y_{J} \times Z_{K'} \times X_{I''}$, and $\Z$-block $B_Z = Z_{K} \times X_{I'} \times Y_{J''}$, we write $\T'_{B_X B_Y B_Z} := \T'|_{B_X,B_Y,B_Z}$. We call $\T'_{B_X B_Y B_Z} $ a \emph{block triple}, and say that it \emph{uses} $B_X,B_Y$, and $B_Z$. For a $\beta \in D$, we say that $(X_I,Y_J,Z_K)$ is \emph{consistent with $\beta$} if, for all $(i,j,k) \in S$,
$$|\{ \ell \in [n] \mid (I_\ell,J_\ell,K_\ell)=(i,j,k) \}| = \beta_{ijk} \cdot n.$$
Similarly, for $\beta, \beta', \beta'' \in D$, we say that $\T'_{B_X B_Y B_Z}$ is consistent with $(\beta,\beta',\beta'')$ if $(X_I,Y_J,Z_K)$ is consistent with $\beta$, $(X_{I'},Y_{J'},Z_{K'})$ is consistent with $\beta'$, and $(X_{I''},Y_{J''},Z_{K''})$ is consistent with $\beta''$.

Let $D_{\alpha,n} \subseteq D_\alpha$ be the set of $\beta \in D_\alpha$ such that $\beta_{ijk}$ is an integer multiple of $1/n$ for all $(i,j,k) \in S$.
Note that, because of the zeroing outs from the previous step, every nonzero block triple $\T'_{B_X B_Y B_Z}$ in $\T'$ is consistent with a $(\beta,\beta',\beta'')$ where $\beta,\beta',\beta'' \in D_{\alpha,n}$.
We can count that $|D_{\alpha,n}| \leq \poly(n)$, for some polynomial depending only on $k_X, k_Y, k_Z$.

Recalling that $$\beta_N = \prod_{(i,j,k) \in S} \beta_{ijk}^{-\beta_{ijk}},$$ we can count that the number of nonzero block triples $\T'_{B_X B_Y B_Z}$ in $\T'$ consistent with a given $(\beta,\beta',\beta'')$ for $\beta,\beta',\beta'' \in D_{\alpha,n}$ is  \begin{align*}&\binom{n}{[\beta_{ijk} \cdot n]_{(i,j,k) \in S}} \cdot \binom{n}{[{\beta'}_{ijk} \cdot n]_{(i,j,k) \in S}} \cdot \binom{n}{[{\beta''}_{ijk} \cdot n]_{(i,j,k) \in S}} \\ = & \left( \prod_{(i,j,k) \in S} \beta_{ijk}^{-\beta_{ijk}} \right)^{n-o(n)} \cdot \left( \prod_{(i,j,k) \in S} {\beta'}_{ijk}^{-{\beta'}_{ijk}} \right)^{n-o(n)} \cdot \left( \prod_{(i,j,k) \in S} {\beta''}_{ijk}^{-{\beta''}_{ijk}} \right)^{n-o(n)} \\ = &(\beta_N \cdot \beta'_N \cdot \beta''_N)^{n - o(n)}.\end{align*}

In particular, the number $N_\alpha$ of nonzero block triples in $\T'$ consistent with $(\alpha,\alpha,\alpha)$ is $$N_\alpha = \alpha_N^{3n - o(n)}.$$

Moreover, we can count that the total number $N_T$ of nonzero block triples in $\T'$ is at most
\begin{align*}
    N_T \leq \sum_{\beta, \beta', \beta'' \in D_{\alpha,n}} (\beta_N \cdot \beta'_N \cdot \beta''_N)^{n - o(n)} \leq |D_{\alpha,n}|^3 \cdot \max_{\beta \in D_{\alpha,n}} \beta_N^{3n - o(n)} \leq \poly(n) \cdot \max_{\beta \in D_{\alpha}} \beta_N^{3n - o(n)}.
\end{align*}

\subsection{Step 3: Carefully sparsifying \texorpdfstring{$\T'$}{T'} using Salem-Spencer sets}\label{subsec:step3}

We will now describe a randomized process for zeroing-out more $\X$-blocks, $\Y$-blocks, and $\Z$-blocks. The ultimate goal is to zero-out blocks so that at least a $\exp(-o(n))$ fraction of blocks is not zeroed out, and so that for every remaining $\X$-block, $\Y$-block, and $\Z$-block, there is exactly one block triple which uses that block and is consistent with $(\alpha,\alpha,\alpha)$. Note that currently, by symmetry, every  $\X$-block, $\Y$-block, and $\Z$-block in $\T'$ has the same number $R$ of block triples which use that block and are consistent with $(\alpha,\alpha,\alpha)$. We can compute $R$ by dividing the total number of block triples consistent with $(\alpha,\alpha,\alpha)$ by the total number of blocks, to see that $$R = \frac{N_\alpha}{N_B}.$$

Let $M$ be a prime number in the range $[100R,200R]$. We are going to define three random hash functions $h_X : [k_X]^n \times [k_Y]^n \times [k_Z]^n \to \mathbb{Z}_M$, $h_Y : [k_Y]^n \times [k_Z]^n \times [k_X]^n \to \mathbb{Z}_M$, and $h_Z : [k_Z]^n \times [k_X]^n \times [k_Y]^n \to \mathbb{Z}_M$ as follows, similar to~\cite{strassenlaser1,coppersmith} and subsequent work.
Recall that there is an integer $P$ such that every $(i,j,k) \in S$ satisfies $i+j+k=P$.
Pick independently and uniformly random $w_0, w_1, w_2, \ldots, w_{3n} \in \mathbb{Z}_M$. Define $h_X, h_Y, h_Z$, for $I \in [k_X]^n,J \in [k_Y]^n, K \in [k_Z]^n$, by:

\begin{align*} h_X (I,J,K) &:= 2\sum_{\ell=1}^n (w_\ell \cdot I_\ell + w_{\ell+n} \cdot J_\ell + w_{\ell+2n} \cdot K_\ell) \ \pmod{M},\\
 h_Y (J,K,I) &:= 2w_0 + 2\sum_{\ell=1}^n (w_\ell \cdot J_\ell + w_{\ell+n} \cdot K_\ell + w_{\ell+2n} \cdot I_\ell) \pmod{M},\\
 h_Z (K,I,J) &:= w_0 + \sum_{\ell=1}^n (w_\ell \cdot (P - K_\ell) + w_{\ell+n} \cdot (P-I_\ell) + w_{\ell+2n} \cdot (P-J_\ell)) \pmod{M}.\end{align*}

Consider any $\X$-block $B_X = X_I \times Y_{J'} \times Z_{K''}$, $\Y$-block $B_Y = Y_{J} \times Z_{K'} \times X_{I''}$, and $\Z$-block $B_Z = Z_{K} \times X_{I'} \times Y_{J''}$, such that $\T'_{B_X B_Y B_Z}$ is a nonzero block triple in $\T'$. Notice that
\begin{itemize}
    \item $h_X(I,J',K'') + h_Y(J,K',I'') = 2h_Z(K,I',J'') \pmod{M}$, regardless of the choice of randomness, since for every $\ell \in [n]$ we have $I_\ell+J_\ell+K_\ell = I'_\ell+J'_\ell+K'_\ell = I''_\ell+J''_\ell+K''_\ell = P$,
    \item The three values $h_X(I,J',K'')$, $h_Y(J,K',I'')$, and $h_Z(K,I',J'')$ are each uniformly random values in $\mathbb{Z}_M$, even when conditioned on one of the other two values, and 
    \item $h_X$ is pairwise-independent, i.e., for any two distinct $(I,J,K), (I',J',K') \in [k_X]^n \times [k_Y]^n \times [k_Z]^n$, the values $h_X(I,J,K)$ and $h_X(I',J',K')$ are independent (and similarly for $h_Y$ or $h_Z$).
\end{itemize}
For an $\X$-block $B_X = X_I \times Y_{J'} \times Z_{K''}$, write $h_X(B_X) := h_X(I,J',K'')$, and similarly define $h_Y(B_Y)$ for a $\Y$-block $B_Y$ and $h_Z(B_Z)$ for a $\Z$-block $B_Z$.

Let $A \subseteq \mathbb{Z}_M$ be a set of size $|A| \geq M^{1-o(1)}$, such that for $a,b,c \in A$, we have $a+b=2c \pmod{M}$ if and only if $a=b=c$. Such a set $A$ exists with these properties by Theorem~\ref{thm:SS}. In $\T'$, zero-out every $\X$-block $B_X$ such that $h_X(B_X) \notin A$, every $\Y$-block $B_Y$ such that $h_Y(B_Y) \notin A$, and every $\Z$-block $B_Z$ such that $h_Z(B_Z) \notin A$. Let the resulting tensor be $\T''$. 

Previous iterations of the laser method used this same hashing scheme, but we now need to analyze it more carefully to determine what parameters it gives when we use it in conjunction with Theorem~\ref{thm:newidea} in step 4 below. Toward this goal, we will bound the expected values of the following three random variables:
\begin{itemize}
    \item $C_1$, the number of nonzero block triples in $\T''$ consistent with $(\alpha,\alpha,\alpha)$,
    \item $C_2$, the number of pairs of nonzero block triples in $\T''$ which are both consistent with $(\alpha,\alpha,\alpha)$ and which share a block (i.e., both use the same $\X$-block, $\Y$-block, or $\Z$-block), and
    \item $C_3$, the number of nonzero block triples in $\T''$.
\end{itemize}

We start with $C_1$. The number of nonzero block triples in $\T'$ consistent with $(\alpha,\alpha,\alpha)$ is $N_\alpha$. Each is not zeroed out in $\T''$ with probability $\frac{|A|}{M^2}$, since this happens if and only if its $\X$-block hashes to a value in $A$ (which happens with probability $\frac{|A|}{M}$) and its $\Y$-block hashes to that same value (which happens with probability $\frac{1}{M}$ since $h_Y(B_Y)$ is uniformly random, even when conditioned on $h_X(B_X)$). Hence, $\E[C_1] = \frac{|A| \cdot N_\alpha}{M^2}$.

We next consider $C_2$, the number of pairs of nonzero block triples in $\T''$ which are both consistent with $(\alpha,\alpha,\alpha)$ and which both use the same $\X$-block, $\Y$-block, or $\Z$-block. The number of such pairs in $\T'$ is $3N_B \cdot \binom{R}{2}$, since there are $N_B$ each of $\X$-blocks, $\Y$-blocks, and $\Z$-blocks in $\T'$, and each has $R$ different block triples consistent with $(\alpha,\alpha,\alpha)$ which use it. Consider a fixed one of those pairs of $\T'_{B_X, B_Y, B_Z}$ and $\T'_{B_X, B_Y', B_Z'}$, where we are assuming without loss of generality that they share an $\X$-block $B_X$. They will both not be zeroed out in $\T''$ if and only if $h_X(B_X) \in A$, $h_Y(B_Y) = h_X(B_X)$, and $h_Y(B_Y') = h_X(B_X)$, which happens with probability $(|A|/M) \cdot (1/M) \cdot (1/M) = |A|/M^{3}$, as those three events are independent from the properties of $h_X$ and $h_Y$. Recall that $R = N_\alpha / N_B$ by definition of $R$, and $M \geq 100\cdot R$ by definition of $M$. Hence, using linearity of expectation, we can bound $$\E[C_2] = \frac{|A|}{M^3} \cdot 3 N_B \cdot \binom{R}{2} \leq \frac{3 \cdot |A| \cdot N_B \cdot R^2}{2 M^3} \leq \frac{3 \cdot |A| \cdot N_B \cdot R}{200 M^2} = \frac{3 \cdot |A| \cdot N_\alpha}{200 M^2}.$$

Finally, we consider $C_3$, the number of nonzero block triples in $\T''$. Similar to $C_1$, each of the $N_T$ nonzero block triples in $\T'$ is not zeroed out in $\T''$ with probability $\frac{|A|}{M^2}$, and so $\E[C_3] = \frac{|A| \cdot N_T}{M^2}$.

Now we define the random variable $C_1' := \max\{ 0, C_1 - 2 C_2 \}$. We have $$\E[C_1'] \geq \E[C_1 - 2 C_2] = \E[C_1] - 2 \E[C_2] \geq \frac{|A| \cdot N_\alpha}{M^2} - 2 \frac{3 \cdot |A| \cdot N_\alpha}{200 M^2} = \frac{97 \cdot |A| \cdot N_\alpha}{100 \cdot M^2}.$$

Since $\E[C_1'] \geq \frac{97 \cdot |A| \cdot N_\alpha}{100 \cdot M^2}$ and $\E[C_3] = \frac{|A| \cdot N_T}{M^2}$, it follows that there is a choice of randomness (i.e., a choice of $w_0, w_1, w_2, \ldots, w_{3n}$ defining the hash functions) for which \begin{align} \frac{C_1'^{3/2}}{C_3^{1/2}} \geq \frac{\left(\frac{97 \cdot |A| \cdot N_\alpha}{100 \cdot M^2} \right)^{3/2}}{\left( \frac{|A| \cdot N_T}{M^2} \right)^{1/2}}. \end{align} (This follows from the power mean inequality; see Lemma~\ref{lem:apppowermean} in Appendix~\ref{app:powermean} below for a proof.) Let us fix this choice of randomness in the remainder of the proof.

\subsection{Step 4: Converting \texorpdfstring{$\T''$}{T''} to an independent sum of \texorpdfstring{$(\alpha,\alpha,\alpha)$}{(alpha,alpha,alpha)}-consistent block triples}

Next, we will zero-out some more blocks in $\T''$ so that there are no pairs of nonzero block triples in $\T''$ which are both consistent with $(\alpha,\alpha,\alpha)$ and which both use the same $\X$-block, $\Y$-block, or $\Z$-block. We do this in the following greedy way: repeatedly pick any block used by $g \geq 2$ nonzero block triples in $\T''$ consistent with $(\alpha,\alpha,\alpha)$, and zero-out that block, until there are none left. Note that each time, when we zero-out $g \geq 2$ block triples which are consistent with $(\alpha,\alpha,\alpha)$, the number of \emph{pairs} of such block triples we remove is $\binom{g}{2} \geq g/2$. Recall that there were initially $C_2$ such pairs. It follows that throughout this process, the expected number of such block triples we zero-out is at most $2 \cdot C_2$. Let $\T'''$ be $\T''$ after this process is complete. Hence, the remaining nonzero block triples in $\T'''$ consistent with $(\alpha,\alpha,\alpha)$ do not share any blocks with each other, and the number of such block triples is at least $\max\{0, C_1 - 2 \cdot C_2\} = C_1'$.
Meanwhile, the total number of nonzero block triples in $\T'''$ is at most the number in $\T''$, which is $C_3$. 

For the final zeroing out, we will apply Theorem~\ref{thm:newidea} to $\T'''$. The partition of the variables of $\T'''$ needed for Theorem~\ref{thm:newidea} is given by the blocks, and we order them such that the block triples consistent with $(\alpha,\alpha,\alpha)$ appear along the diagonal; we know from the previous paragraph that they do not share blocks with each other. Recall that $\T'''$ consists of at least $C_1'$ block triples consistent with $(\alpha,\alpha,\alpha)$, plus at most $C_3$ other block triples. It follows from Theorem~\ref{thm:newidea} that we can zero-out $\T'''$ into a direct sum of $L$ block triples consistent with $(\alpha,\alpha,\alpha)$, where $$L \geq\frac{2}{3 \sqrt{3}} \cdot \frac{C_1'}{\sqrt{C_3/C_1'}}.$$

Recalling that $N_\alpha = \alpha_N^{3n - o(n)}$, $N_B = N_\alpha / R = \alpha_B^{3n - o(n)}$, and $N_T \leq \poly(n) \cdot \max_{\beta \in D_\alpha} \beta_N^{3 n - o(n)}$, we can lower-bound $L$ by
\allowdisplaybreaks \begin{align*}L &\geq\frac{2}{3 \sqrt{3}} \cdot \frac{C_1'}{\sqrt{C_3/C_1'}} 
\\ &= \frac{2}{3 \sqrt{3}} \cdot \frac{C_1'^{3/2}}{C_3^{1/2}}
\\ &\geq \frac{2}{3 \sqrt{3}} \cdot \frac{\left(\frac{97 \cdot |A|\cdot N_\alpha}{100 \cdot M^{2}}\right)^{3/2}}{\left(\frac{|A| \cdot N_T}{M^2}\right)^{1/2}}
\\ &= \frac{97 \sqrt{291}}{4500} \cdot \left(\frac{|A|\cdot N_\alpha}{M^{2}}\right) \cdot \sqrt{\frac{N_\alpha}{N_T}}
\\ &= \left(\frac{N_\alpha}{M^{1+o(1)}}\right) \cdot \sqrt{\frac{N_\alpha}{N_T}}
\\ &\geq  \left(\frac{N_\alpha}{(200\cdot R)^{1+o(1)}}\right) \cdot \sqrt{\frac{N_\alpha}{N_T}}
\\ &\geq  N_B^{1 - o(1)} \cdot \sqrt{\frac{N_\alpha^{1-o(1)}}{N_T}}
\\ &\geq \frac{1}{\poly(n)} \cdot  \frac{\alpha_B^{3n - o(n)} \cdot \alpha_N^{1.5n - o(n)}}{\max_{\beta \in D_\alpha} \beta_N^{1.5 n - o(n)}}.
\end{align*}
Each block triple $\T'''_{X_B, Y_B, Z_B}$ consistent with $(\alpha,\alpha,\alpha)$ can be written as $$\T'''_{X_B, Y_B, Z_B} \equiv \bigotimes_{(i,j,k) \in S} (T_{ijk} \otimes T_{ijk}^r \otimes T_{ijk}^{rr})^{\alpha_{ijk} \cdot n},$$
and hence has value $$V_\tau(\T'''_{X_B, Y_B, Z_B}) \geq \prod_{(i,j,k) \in S} V_\tau(T_{ijk} \otimes T_{ijk}^r \otimes T_{ijk}^{rr})^{\alpha_{ijk} \cdot n} = \prod_{(i,j,k) \in S} V_\tau(T_{ijk})^{3 \alpha_{ijk} \cdot n} = \alpha_V^{3n}.$$
It follows that
$$V_\tau(\T''') \geq L \cdot \alpha_V^{3n} \geq \frac{1}{\poly(n)} \cdot  \frac{\alpha_V^{3n} \cdot \alpha_B^{3n - o(n)} \cdot \alpha_N^{1.5n - o(n)}}{\max_{\beta \in D_\alpha} \beta_N^{1.5 n - o(n)}},$$
and hence that
$$V_\tau(T) \geq V_\tau(\T)^{\frac{1}{3n}} \geq V_\tau(\T''')^{\frac{1}{3n}} \geq \frac{\alpha_V \cdot \alpha_B^{1 - o(1)} \cdot \alpha_N^{1/2 - o(1)}}{\max_{\beta \in D_\alpha} \beta_N^{1/2 - o(1)}}.$$
The desired result follows as $n \to \infty$.

\section{Algorithms and Heuristics for Applying Theorem~\ref{thm:main}} \label{sec:heuristics}

We now move on to applying Theorem~\ref{thm:main} to the Coppersmith-Winograd tensor $CW_q$. Throughout this section we use the same notation as in Section~\ref{sec:proof}. The high-level idea is to apply Theorem~\ref{thm:main} in a recursive fashion: to bound $V_\tau(T)$ for a tensor $T$ (in our case, $T$ will be $CW_q^{\otimes k}$ or one of its subtensors), we pick a partitioning of its variables, recursively bound $V_\tau(T_{ijk})$ for each subtensor $T_{ijk}$, then pick an $\alpha \in D$ to get a resulting lower bound on $V_\tau(T)$.

When using this approach, there are two choices we need to make at each level: which partitioning of the variables to use, and which $\alpha \in D$ to pick. As we will see in Section~\ref{sec:final-answer}, there are very natural partitionings of the variables of $CW_q^{\otimes k}$ and its subtensors that we will use. The main practical difficulty which arises in this approach is picking the optimal value of $\alpha$. Indeed, maximizing the value bound of Theorem~\ref{thm:main} over all $\alpha \in D$ is a non-convex optimization problem, and for large enough tensors (including $CW_q^{\otimes 8}$, as well as the subtensors of $CW_q^{\otimes k}$ for $k \geq 16$), modern software seems unable to solve it in a reasonable amount of time. (Modern software does actually solve the optimization problem for the subtensors of $CW_q^{\otimes k}$ for $k\leq 8$.)

Instead, as in past work~\cite{v12,legall}, we use some heuristics to find choices of $\alpha$ which we believe are close to optimal. The heuristics we use are different from those of past work, in order to take advantage of the new improvement in Theorem~\ref{thm:main}; we will see that the heuristics we use here achieve better value bounds than the heuristics of past work. In the remainder of this section, we give a detailed description of these heuristics that we use.

\subsection{Optimizing over \texorpdfstring{$\alpha \in D_{\gamma}$}{alpha in D gamma} for fixed \texorpdfstring{$\gamma \in D$}{gamma in D}} \label{subsec:optgamma}

Although optimizing the bound of Theorem~\ref{thm:main} over all $\alpha \in D$ appears difficult, we begin by remarking that, for any fixed $\gamma \in D$, optimizing over all $\alpha \in D_\gamma$ can be done by solving two programs:

\begin{enumerate}[label={\lsstyle\interextrabold\fontsize{12}{12}\selectfont PROBLEM \arabic*.},  align= left, itemsep=1em, labelindent = 0em,ref=\arabic*] 
    \item \label{prob1} Maximize $$\alpha_{V_\tau} \cdot \alpha_B \cdot \sqrt{\alpha_N}$$ subject to $\alpha \in D_\gamma$.
    \item \label{prob2} Maximize $$\beta_N$$ subject to $\beta \in D_\gamma$.
\end{enumerate}

Recall that $\alpha \in D_\gamma$ if and only if $\alpha_{X_i} = \gamma_{X_i}$ for all $i \in [k_X]$, $\alpha_{Y_j} = \gamma_{Y_j}$ for all $j \in [k_Y]$, and $\alpha_{Z_k} = \gamma_{Z_k}$ for all $k \in [k_Z]$. In other words, in both of these problems, the constraints are all linear constraints. Meanwhile, in both, the objective function is a concave function. Hence, we can efficiently solve these two problems using a convex optimization library to obtain $\alpha, \beta \in D_\gamma$. Theorem~\ref{thm:main} then implies the bound $$V_\tau(T) \geq \alpha_{V_\tau} \cdot \alpha_B \cdot \sqrt{\frac{\alpha_N}{\beta_N}},$$
and this is the best possible bound we can get using an $\alpha \in D_\gamma$.

\subsection{Heuristics for picking \texorpdfstring{$\gamma \in D$}{gamma in D}} \label{subsec:heuristics}

As discussed, it seems computationally difficult to find the optimal $\gamma \in D$ to use in the above approach. Instead, when analyzing a tensor $T$, we try the following heuristic choices of $\gamma$, and take the maximum value bound which results from any of them.

Our first heuristic is a slight improvement of that of \cite[{Algorithm~B}]{legall}.

\begin{enumerate}[label={\lsstyle\interextrabold\fontsize{12}{12}\selectfont HEURISTIC \arabic*.},  align= left, itemsep=1em, labelindent = 0em,ref=\arabic*]  
\item \label{heur:1} Use the $\gamma$ which maximizes $$\gamma_{V_\tau} \cdot \gamma_B$$ subject to $\gamma \in D$ and $\gamma = \argmax_{\gamma' \in D_\gamma} \gamma_N$.
\end{enumerate}
With this choice of $\gamma$, we know that when we compute the value bound of Section~\ref{subsec:optgamma}, we will pick $\beta = \gamma$, and so the final value we output will be at least $\gamma_{V_\tau} \cdot \gamma_B$, but possibly even greater if a better choice of $\alpha$ is found. By comparison, \cite[{Algorithm~B}]{legall} computes this same $\gamma$, but then outputs the value one gets from picking $\alpha=\beta=\gamma$ in Section~\ref{subsec:optgamma}.

As written, it is not evident that Heuristic~\ref{heur:1} can be computed much more quickly than the optimal choice of $\gamma \in D$.
However, prior work \cite[{Figure~1}]{v12}, \cite[{Proposition~4.1}]{legall} gives a way to define a set of nonlinear constraints NonLin$(\gamma)$ on $\gamma \in D$ which are satisfied if and only if $\gamma = \argmax_{\gamma' \in D_\gamma} \gamma_N$. (These can also be defined by analyzing the convex program in Problem~\ref{prob2} directly.) The number of constraints in NonLin$(\gamma)$ is small enough for many of the subtensors of $CW_q^{\otimes 16}$ that optimization software can still compute the $\gamma$ in Heuristic~\ref{heur:1} after replacing the constraint $\gamma = \argmax_{\gamma' \in D_\gamma} \gamma_N$ with NonLin$(\gamma)$. We find that this heuristic runs quickly enough and gives good value bounds for many of the subtensors of $CW_q^{\otimes 16}$.

Our remaining heuristics only involve solving convex programs over linear constraints to compute $\gamma$, and run quickly enough for all the tensors we need to analyze.

\begin{enumerate}[label={\lsstyle\interextrabold\fontsize{12}{12}\selectfont HEURISTIC \arabic*.},  align= left, itemsep=1em, labelindent = 0em,ref=\arabic*]  \setcounter{enumi}{1}
     \item \label{heur:2} Use the $\gamma$ which maximizes  $$\gamma_{V_\tau} \cdot \gamma_B$$ subject to $\gamma \in D$.
     \item\label{heur:3} Use the $\gamma$ which maximizes  $$\gamma_{V_\tau} \cdot \gamma_B + \lambda/\gamma_N$$ subject to $\gamma \in D$, for various nonnegative constant parameters $\lambda$. This generalizes Heuristic~\ref{heur:2}. 
     \item\label{heur:4} Use the $\gamma$ which maximizes  $$\gamma_{V_\tau} \cdot \gamma_B \cdot \sqrt{\gamma_N}$$ subject to $\gamma \in D$.
\end{enumerate}

Out of these heuristics, the best bounds were obtained for most tensors via Heuristic~\ref{heur:3} for various choices of $\lambda$ between $0$ (as in Problem~\ref{prob2}) and $10^7$. It is not immediately clear which choice of $\lambda$ in Heuristic~\ref{heur:3} is best, although experimentally, it seems that using larger values of $\lambda$ produces better results. We also tried a few other heuristics, including maximizing $\gamma_{V_\tau} \cdot \gamma_B / \sqrt{\gamma_N}$ or $1/\gamma_N$ over $\gamma \in D$, but these didn't yield the best value bounds for any tensors we analyzed.

\subsection{Faster Computation on Tensors with Symmetries}

We briefly note one additional technique which can be used to speed up the calculations needed when applying Theorem~\ref{thm:main} to a tensor $T$ which exhibits some symmetry (either the calculations to apply the Theorem optimally, or the heuristics described earlier in this section).  Suppose, for instance, that for all $(i,j,k) \in S$ we have $V_\tau(T_{ijk}) = V_\tau(T_{jik})$. As we will see below, this will be the case in nearly all our applications of Theorem~\ref{thm:main}. Then, in all of the optimization problems over $\alpha \in D$ (or similarly $\beta \in D$ or $\gamma \in D$) that we need to solve, we may assume without loss of generality that $\alpha_{ijk} = \alpha_{jki}$ for all $(i,j,k) \in S$. Prior work also used such symmetry considerations; see~\cite{stothers,v12,legall} for more details.


\section{Bounding \texorpdfstring{$V_\tau(CW_q)$}{V(CWq)}} \label{sec:final-answer}
Recall the definition of the family of Coppersmith-Winograd tensors $CW_q$ parameterized by integer $q\geq 0$. $CW_q$ is a tensor over $\{x_0,\ldots,x_{q+1}\}$, $\{y_0, \ldots, y_{q+1}\}$, $\{z_0, \ldots, z_{q+1}\}$ given by $$CW_q := x_0 y_0 z_{q+1} + x_0 z_{q+1} y_0 + x_{q+1} y_0 z_0 + \sum_{i=1}^q (x_0 y_i z_i + x_i y_0 z_i + x_i y_i z_0).$$ 
Coppersmith and Winograd~\cite{coppersmith} showed that $\tilde{R}(CW_q) = q+2$. We will apply the approach from Section~\ref{sec:heuristics} to give a lower bound on $V_\tau(CW_q)$, and hence an upper bound on $\omega$.

\subsection{Variable Partition for \texorpdfstring{$CW_q$}{CWq}}

$CW_q$ has a natural partitioning of its variables into $X=X^0\cup X^1\cup X^2$, $Y=Y^0\cup Y^1\cup Y^2$, $Z=Z^0\cup Z^1\cup Z^2$, where $X^0=\{x_0\}$, $X^1=\{x_1,\ldots,x_q\}$, $X^2=\{x_{q+1}\}$, $Y^0=\{y_0\}$, $Y^1=\{y_1,\ldots,y_q\}$, $Y^2=\{y_{q+1}\}$, and $Z^0=\{z_0\}$, $Z^1=\{z_1,\ldots,z_q\}$, $Z^2=\{z_{q+1}\}$. 
For $i,j,k \in \{0,1,2\}$, we write $T_{ijk} := CW_q|_{X^i,Y^j,Z^k}$, and call these the \emph{subtensors} of $CW_q$. We see that the nonzero subtensors are: $T_{002}=x_0 y_0 z_{q+1}$, $T_{020}= x_0 z_{q+1} y_0$,  $T_{200}=x_{q+1} y_0 z_0$, $T_{011}=\sum_{i=1}^q x_0 y_i z_i$,  $T_{101}=\sum_{i=1}^q x_i y_0 z_i$, and $T_{110}=\sum_{i=1}^q x_i y_i z_0$. In particular, we see that $T_{ijk} \neq 0$ if and only if $i+j+k=2$, meaning
$$CW_q=\sum_{i,j,k\in \{0,1,2\} \mid i+j+k=2} T_{ijk}.$$
Hence, $CW_q$ is a $2$-partitioned tensor as defined in Section~\ref{sec:proof}, and so we can apply Theorem~\ref{thm:main} to it to bound its value $V_\tau(CW_q)$.

The subtensors of $CW_q$ are all isomorphic to matrix multiplication tensors, and so their values can all be computed by following the definition of $V_\tau$. The subtensors
$T_{002}$, $T_{020}$, and $T_{200}$ are all isomorphic to $\langle 1,1,1\rangle$ and have value $1$ for any $\tau \in [2/3,1]$. Meanwhile, $T_{011}$, $T_{101}$, and $T_{110}$ are isomorphic to $\langle 1,1,q\rangle$ or one of its two rotations, and thus have value $q^\tau$. We can thus apply Theorem~\ref{thm:main} to bound the value of $CW_q$.

\subsection{Variable Partition for \texorpdfstring{$CW_q^{\otimes t}$}{CWqot}}

However, as in prior work, instead of applying Theorem~\ref{thm:main} only to $CW_q$, we will apply it to $CW_q^{\otimes t}$ for $t$ a power of $2$. We focus on $t \in \{2,4,8,16,32\}$ as these are the powers for which the software solvers obtain solutions. In general, it is not clear that applying Theorem~\ref{thm:main} to a power of a tensor rather than the tensor itself should yield an improved value. However, prior work on analyzing the value of $CW_q$ has noticed that the laser method applied to powers of $CW_q$ can yield an improved bound because of a `merging' phenomenon, wherein we can prove better value bounds on certain subtensors of $CW_q^{\otimes t}$ by merging together different matrix multiplication tensors into single, larger matrix multiplication tensors. We will also take advantage of this here. 

We first describe the necessary variable partitioning of $CW_q^{\otimes t}$. Recall that $CW_q^{\otimes t}$ is a tensor over the variables $\{x_A\}$, $\{y_B\}$, $\{z_C\}$ for $A,B,C\in \{0,1,\ldots,q+1\}^{t}$. 
Inspired by the variable partitions of $CW_q$, we define a function $\kappa:\{0,1,\ldots,q+1\}\rightarrow \{0,1,2\}$ which maps $\kappa(0)=0$, $\kappa(q+1)=2$, and each other each $i\in \{1,\ldots,q\}$ to $\kappa(i)=1$. For $I \in \{0,1,\ldots,2t\}$, let $L_{t,I}$ be the set of index sequences whose coordinates sum to $I$, i.e., $L_{t,I} := \{ A \in \{0,1,\ldots,q+1\}^t \mid \sum_{\ell=1}^t \kappa(A_\ell)=I \}$. Then, for each $I,J,K \in \{0,\ldots,2t\}$, we define $X^{t,I} := \{x_A \mid A \in L_{t,I} \}$, $Y^{t,J} := \{y_B \mid B \in L_{t,J} \}$, and $Z^{t,K} := \{z_C \mid C \in L_{t,K} \}$, and write $T^{t}_{IJK} := CW_q^{\otimes t}|_{X^{t,I} Y^{t,J} Z^{t,K}}$.

As an example, the subtensor $T^2_{112}$ of $CW^{\otimes 2}_q$ is $$\sum_{i=1}^q x_{0,i} y_{0,i} z_{q+1,0} + \sum_{i=1}^q x_{i,0} y_{i,0} z_{0,q+1} + \sum_{i=1}^q\sum_{j=1}^q (x_{0,j} y_{i,0} z_{i,j} + x_{i,0} y_{0,j} z_{i,j}).$$

Since every term $x_iy_jz_k$ with nonzero coefficient in $CW_q$ has $\kappa(i)+\kappa(j)+\kappa(k)=2$, it follows that every nonzero $T^t_{IJK}$ in $CW_q^{\otimes t}$ has $I+J+K=2t$, and so the $X^{t,I},Y^{t,J},Z^{t,K}$ give a partitioning of the variables of $CW_q^{\otimes 2t}$ with
$$CW_q^{\otimes t} = \sum_{I,J,K\in \{0,\ldots,2t\} \mid I+J+K=2t} T^t_{IJK}.$$ Hence, $CW_q^{\otimes t}$ is a partitioned tensor with outer structure $C_{2t+1}$.

In order to apply Theorem~\ref{thm:main} to $CW_q^{\otimes t}$, we need a way to bound the values of subtensors $T^t_{IJK}$. As we can see with $T^2_{112}$ above, these subtensors are no longer always isomorphic to matrix multiplication tensors when $t \geq 2$, and so bounding these values will be less straightforward than before.

\subsection{Variable Partition for \texorpdfstring{$T^t_{IJK}$}{TIJKt}} \label{subsec:Tpart}
Let us assume that $t$ is even (as we will be dealing with powers of $2$).
Fix $I,J,K \in \{0,1,\ldots,2t\}$ with $I+J+K=2t$. To bound $V_\tau(T^t_{IJK})$, we will again use Theorem~\ref{thm:main}.  Again, to do this, we need a partition of the variables of $T^t_{IJK}$. The key remark is that for any $I \in \{0,1,\ldots,2t\}$ and any $A \in L_{t,I}$, there is some $I' \in \{0,1,\ldots,I\}$ such that $\sum_{\ell=1}^{t/2} A_\ell = I'$, and $\sum_{\ell=t/2+1}^t A_\ell = I-I'$. This splitting gives rise to the decomposition: 
$$T^{t}_{I,J,K} = \sum_{I',J',K'\in \{0,\ldots,t\}, I'+J'+K'=t, I'\leq I,J'\leq J, K'\leq K} T^{t/2}_{I',J',K'}\otimes T^{t/2}_{(I-I'),(J-J'),(K-K')}.$$

For instance, as can be seen above, $$T^2_{112}=T_{002}\otimes T_{110} + T_{110}\otimes T_{002} + T_{011}\otimes T_{101} + T_{101}\otimes T_{011}.$$

For $I',J',K' \in \{0,1,\ldots,t\}$ with $I'+J'+K'=t, I'\leq I,J'\leq J, K'\leq K$, write $$T^{t}(I',J',K'):=T^t_{I',J',K'}\otimes T^t_{I-I',J-J',K-K'}.$$
$T^{t}(I',J',K')$ is a tensor over the variables $X^{t/2,I'} \times X^{t/2,I-I'}$, $Y^{t/2,J'} \times Y^{t/2,J-J'}$, $Z^{t/2,K'} \times Z^{t/2,K-K'}$. Hence, these sets for $I',J',K' \in \{0,1,\ldots,t\}$ with $I'\leq I,J'\leq J, K'\leq K$ partition the variables of $T^{t}_{IJK}$, and they show that $T^{t}_{IJK}$ is a $t$-partitioned tensor via
$$T^{t}_{I,J,K} = \sum_{I',J',K'\in \{0,\ldots,t\}, I'+J'+K'=t, I'\leq I,J'\leq J, K'\leq K} T^{t}(I',J',K').$$

\subsection{Larger Values of Some Subtensors from Merging}

Finally, for a few subtensors, we can prove an even greater bound on their value than is given by the approach of Section~\ref{subsec:Tpart}. This is the key reason why analyzing higher powers of $CW_q$ with Theorem~\ref{thm:main} can yield higher values. The idea is to note that $T^t_{IJK}$ is isomorphic to a matrix multiplication whenever $I=0$, $J=0$, or $K=0$. 

Consider, for instance, $T^2_{220}$. As above, we have that
\begin{align*}
    T^2_{220} &= T_{200} \otimes T_{020} + T_{020} \otimes T_{200} + T_{110} \otimes T_{110} \\
    &= x_{q+1,0} y_{0,q+1} z_{0,0} + x_{0,q+1} y_{q+1,0} z_{0,0} + \sum_{i=1}^q x_{i,i} y_{i,i} z_{0,0}.
\end{align*}
If we `rename' $x_{q+1,0}$ to $x_{0,0}$, $y_{0,q+1}$ to $y_{0,0}$, $x_{0,q+1}$ to $x_{q+1,q+1}$, and $y_{q+1,0}$ to $y_{q+1,q+1}$, then this shows $$T^2_{220} \equiv x_{0,0} y_{0,0} z_{0,0} + x_{q+1,q+1} y_{q+1,q+1} z_{0,0} + \sum_{i=1}^q x_{i,i} y_{i,i} z_{0,0} = \sum_{i=0}^{q+1} x_{i,i} y_{i,i} z_{0,0} \equiv \langle q+2,1,1 \rangle.$$
Hence, $V_\tau(T^2_{220}) = (q+2)^\tau$. One can verify that this is better than we would have gotten by applying Theorem~\ref{thm:main} to $T^2_{220}$ as in Section~\ref{subsec:Tpart}. This is intuitively because that approach would have treated the three parts  $T_{200} \otimes T_{020}, T_{020} \otimes T_{200}, T_{110} \otimes T_{110}$ as separate tensors instead of `merging' them together into a single matrix multiplication tensor.

More generally, the subtensor $T^t_{IJ0}$ for $J \in \{0,1,\ldots,t/2\}$ and $I = t-J \geq J$ can be merged into a single matrix multiplication tensor, yielding the value $$V_\tau(T^t_{IJ0}) = \left( \sum_{b \leq J, b \equiv J\pmod{2}} \binom{t/2}{b,\frac{J-b}{2},\frac{I-b}{2}} \cdot q^b  \right)^\tau.$$
 See~\cite[{Claim~1}]{v12} for the full calculation. The similar value bound holds for any $T^t_{IJK}$ where at least one of $I,J,K$ is $0$ by symmetry.

\subsection{Numerical Value Bounds}

Finally, we have written code to carry out the recursive procedure described in this section. We ultimately find that, for $\tau = 2.3728596 / 3$, we get $V_\tau(CW_5^{\otimes 32}) > 7^{32} + 9.19 \times 10^{22}$. The code to verify this bound can be found at the link in~\cite{vercode}.
The basis of our code is the publicly available code of Le Gall~\cite{legall}. We then added new functions to compute the maximum possible value achievable by Theorem~\ref{thm:main}, as well as all of the aforementioned heuristics. Our code makes use of both the NLPSolve function in Maple~\cite{maple}, and the CVX convex optimization software for Matlab~\cite{cvx,gb08,MATLAB2019}.

The values for the subtensors of $CW_5^{\otimes t}$ for $t \in \{2,4,8\}$ are bounded by finding the optimal $\gamma$ to use in Section~\ref{subsec:optgamma}. Most of the values for the subtensors of $CW_5^{\otimes 16}$, and all of the values for the subtensors of $CW_5^{\otimes 32}$, are bounded using Heuristic~\ref{heur:2} from Section~\ref{subsec:heuristics}. For the subtensors of $CW_5^{\otimes 16}$ whose values are bounded using a different heuristic, the exact point $\gamma$ that we use in Section~\ref{subsec:optgamma} is included with the code\footnote{In fact, the $\gamma$ points we provide coincide with the $\beta$ points of Section~\ref{subsec:optgamma}.}. Finally, the `global' value bound for $CW_5^{\otimes 32}$ is computed using Heuristic~\ref{heur:2}. All of the subtensor value bounds we compute are provided along with the code.
 
\section*{Acknowledgements}

We would like to thank Fran{\c{c}}ois Le Gall for answering our questions about his code from~\cite{legall},  Ryan Williams for his suggestions for improving the presentation of this paper and optimizing our code, and anonymous reviewers for their helpful comments.

\printbibliography

\appendix 

\section{Rounding \texorpdfstring{$\alpha$}{alpha} in the proof of \texorpdfstring{Theorem~\ref{thm:main}}{Theorem 4.1}}\label{app:rounding}

Recall the notation from Section~\ref{sec:proof}. Fix any $\alpha \in D$ and any sufficiently large positive integer $n$. 

\begin{lemma}
There is an $\alpha' \in D$ such that, for all $(i,j,k) \in S$,
\begin{itemize}
    \item $\alpha'_{ijk}$ is an integer multiple of $\frac{1}{n}$, and
    \item $|\alpha_{ijk} - \alpha'_{ijk}| < \frac{1}{n}$.
\end{itemize}
\end{lemma}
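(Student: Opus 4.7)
The plan is to carry out a largest-remainder rounding of $\alpha$. First I would write each scaled entry as $n\alpha_{ijk} = \lfloor n\alpha_{ijk}\rfloor + f_{ijk}$ with fractional part $f_{ijk} \in [0,1)$, and set $d := n - \sum_{(i,j,k) \in S} \lfloor n\alpha_{ijk}\rfloor$. Since $\sum_{(i,j,k)\in S} n\alpha_{ijk} = n$ is an integer, $d$ equals the nonnegative integer $\sum_{(i,j,k) \in S} f_{ijk}$, so the deficit from naive flooring is exactly $d$ units of $1/n$.

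If $d=0$ then every $\alpha_{ijk}$ is already a multiple of $1/n$ and I would just take $\alpha' := \alpha$. Otherwise, I would select a subset $T \subseteq \{(i,j,k) \in S : f_{ijk} > 0\}$ of cardinality exactly $d$. This selection is possible because $d = \sum_{(i,j,k) : f_{ijk}>0} f_{ijk}$ with each summand strictly less than $1$, so the number of entries with $f_{ijk}>0$ is at least $d$. I would then define
\[
\alpha'_{ijk} := \begin{cases} (\lfloor n\alpha_{ijk}\rfloor + 1)/n & \text{if } (i,j,k) \in T, \\ \lfloor n\alpha_{ijk}\rfloor/n & \text{if } (i,j,k) \in S \setminus T. \end{cases}
\]

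Verification is routine and does not need to be spelled out in detail: each $\alpha'_{ijk}$ is nonnegative and a multiple of $1/n$ by construction, and the sum is $(\sum_{(i,j,k)\in S} \lfloor n\alpha_{ijk}\rfloor + |T|)/n = ((n-d)+d)/n = 1$, so $\alpha' \in D$. For the error bound one splits into cases: if $(i,j,k)\notin T$ then $|\alpha_{ijk} - \alpha'_{ijk}| = f_{ijk}/n < 1/n$, and if $(i,j,k) \in T$ then $|\alpha_{ijk} - \alpha'_{ijk}| = (1-f_{ijk})/n$, which is strictly less than $1/n$ precisely because $T$ was chosen to exclude entries with $f_{ijk}=0$. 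The one mild subtlety---really the only point that requires a moment of thought, and the reason I prefer largest-remainder rounding over a prefix-sum (Hamilton) scheme here---is achieving the \emph{strict} bound $<1/n$ rather than $\le 1/n$ or $<2/n$; restricting $T$ to entries with positive fractional part is exactly what forces strictness.
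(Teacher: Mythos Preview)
Your proof is correct and follows essentially the same approach as the paper: both floor each $n\alpha_{ijk}$, compute the integer deficit $d$ (the paper calls it $K$), observe that there are at least $d$ entries with nonzero fractional part, and then add $1/n$ back to any $d$ of those entries. Your explicit emphasis on restricting $T$ to entries with $f_{ijk}>0$ to obtain the strict inequality is exactly the role played by the paper's set $S'$.
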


\begin{proof}
Let $S' \subseteq S$ be the set of $(i,j,k) \in S$ such that $\alpha_{ijk}$ is not an integer multiple of $\frac{1}{n}$.
For each $(i,j,k) \in S \setminus S'$, we pick $\alpha'_{ijk} = \alpha_{ijk}$. Initially, for each $(i,j,k) \in S'$, let $\alpha'_{ijk}$ be $\alpha_{ijk}$ rounded \emph{down} to the next integer multiple of $\frac{1}{n}$, i.e., pick $\alpha'_{ijk} = \frac{\lfloor n \cdot \alpha_{ijk} \rfloor}{n}$. Since $\alpha \in D$ we have that $\sum_{(i,j,k) \in S} \alpha_{ijk} = 1$. Let $t = \sum_{(i,j,k) \in S} \alpha'_{ijk}$. $t$ is an integer multiple of $\frac{1}{n}$. Moreover, we have that
$$1 - t = 1 - \sum_{(i,j,k) \in S} \alpha'_{ijk} = \sum_{(i,j,k) \in S} \alpha_{ijk} - \alpha'_{ijk} =  \sum_{(i,j,k) \in S'} \alpha_{ijk} - \alpha'_{ijk} \leq \sum_{(i,j,k) \in S'} \frac{1}{n} = \frac{|S'|}{n}.$$
There is hence a nonnegative integer $K \leq |S'|$ such that $t = 1 - \frac{K}{n}$. Pick any $K$ elements $(i,j,k)$ of $S'$ and add $\frac{1}{n}$ to $\alpha'_{ijk}$. We now have $\sum_{(i,j,k) \in S} \alpha'_{ijk} = 1$, and so $\alpha' \in D$.

We always have that $\alpha'_{ijk}$ is an integer multiple of $\frac{1}{n}$, since we originally picked integer multiples of $\frac{1}{n}$, and then possibly added $\frac{1}{n}$ to them. Finally, we always have $|\alpha_{ijk} - \alpha'_{ijk}| < \frac{1}{n}$ since we initially rounded each $\alpha_{ijk}$ for $(i,j,k) \in S'$ down to the next integer multiple of $\frac{1}{n}$, then possibly added $\frac{1}{n}$ to it.
\end{proof}

\begin{lemma}
Let $\alpha,\alpha' \in D$ be as above. Then, $$\frac{1}{2^{o(n)}} \leq \frac{\alpha_N}{\alpha'_N} \leq 2^{o(n)}.$$
\end{lemma}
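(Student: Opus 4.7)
The plan is to take logarithms: it suffices to show $|\log \alpha_N - \log \alpha'_N| = o(n)$, since this gives $\alpha_N/\alpha'_N \in 2^{\pm o(n)}$. Writing $\log \alpha_N = \sum_{(i,j,k) \in S} f(\alpha_{ijk})$ where $f(x) := -x \log x$ (with the convention $f(0) = 0$), and using that $|S| \leq k_X k_Y k_Z$ is a constant independent of $n$, it is enough to show that for each triple $(i,j,k) \in S$,
\[
|f(\alpha_{ijk}) - f(\alpha'_{ijk})| = o(1).
\]
Since the previous lemma guarantees $|\alpha_{ijk} - \alpha'_{ijk}| < 1/n$, this reduces to a quantitative uniform-continuity statement for $f$ on $[0,1]$.

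To establish this, I would split into two cases according to where the two values lie. When $\alpha_{ijk}, \alpha'_{ijk} \geq 1/n^2$, I would apply the mean value theorem using $f'(x) = -\log x - 1$, whose magnitude on $[1/n^2, 1]$ is $O(\log n)$, so that $|f(\alpha_{ijk}) - f(\alpha'_{ijk})| \leq O(\log n) \cdot (1/n) = O(\log n / n)$. When at least one of the two values is less than $1/n^2$, the triangle inequality with $|\alpha_{ijk} - \alpha'_{ijk}| < 1/n$ forces both to lie in $[0, 2/n]$, on which $|f(x)| \leq (2/n)\log(n/2) = O(\log n / n)$ directly, so the difference of the two terms is again $O(\log n / n)$. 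Summing over $(i,j,k) \in S$ then gives $|\log \alpha_N - \log \alpha'_N| = O(\log n / n) = o(1)$, which is certainly $o(n)$, yielding the claim.

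The only delicate point is the behavior of $f$ near $0$: $f$ is continuous there but $f'$ blows up, so the mean value theorem cannot be applied across $x = 0$. This is precisely why the two-case analysis above is needed; treating the small-value regime by a direct bound on $|f|$ rather than on $|f'|$ circumvents the issue. Beyond this, the argument is just standard quantitative uniform continuity and should be almost mechanical.
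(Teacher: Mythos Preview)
Your proof is correct. The overall structure matches the paper's: take logarithms, use that $|S|$ is a constant, and reduce to bounding $|f(\alpha_{ijk})-f(\alpha'_{ijk})|$ for $f(x)=-x\log x$ term by term. Where you differ is in how that last estimate is carried out. The paper exploits two features of the specific setup that you do not use: (i) from the previous lemma, whenever $\alpha_{ijk}$ is already an integer multiple of $1/n$ (in particular when $\alpha_{ijk}=0$) one has $\alpha'_{ijk}=\alpha_{ijk}$, so that factor contributes nothing; and (ii) $\alpha$ is fixed independently of $n$, so any remaining $\alpha_{ijk}$ is a positive constant, and a Taylor expansion $\log(1+\delta/\alpha_{ijk})\approx \delta/\alpha_{ijk}$ gives a bound of $O(1/n)$ on the log of each factor.

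Your argument instead proves a uniform-continuity statement for $f$ on $[0,1]$ via the threshold $1/n^2$: mean value theorem above the threshold, and a direct $|f|\le O((\log n)/n)$ bound below it. This is slightly less sharp (you get $O((\log n)/n)$ rather than $O(1/n)$, both far stronger than the stated $o(n)$) but more robust: it uses only $|\alpha_{ijk}-\alpha'_{ijk}|<1/n$ and would work even if $\alpha$ were allowed to vary with $n$ or if the rounding did not fix coordinates that are already multiples of $1/n$. Either route is fine here.
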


\begin{proof}
Recall that $$\alpha_N := \prod_{(i,j,k) \in S} \alpha_{ijk}^{-\alpha_{ijk}}.$$ Thus, $$\frac{\alpha_N}{\alpha'_N} = \prod_{(i,j,k) \in S} \frac{\alpha_{ijk}^{-\alpha_{ijk}}}{{\alpha'}_{ijk}^{-{\alpha'}_{ijk}}}.$$
Since $S$ is a constant-sized set, it is sufficient to show that for any fixed $(i,j,k) \in S$ we have $$\frac{1}{2^{o(n)}} \leq \frac{\alpha_{ijk}^{-\alpha_{ijk}}}{{\alpha'}_{ijk}^{-\alpha'_{ijk}}}\leq 2^{o(n)}.$$
First, if $\alpha_{ijk}$ is an integer multiple of $1/n$, including if $\alpha_{ijk}=0$, then we have $\alpha'_{ijk} = \alpha_{ijk}$, and so $\frac{\alpha_{ijk}^{-\alpha_{ijk}}}{{\alpha'}_{ijk}^{-{\alpha'}_{ijk}}} = 1$.
Otherwise, let $\delta = \alpha'_{ijk} - \alpha_{ijk}$, so we have $0 < |\delta| < \frac{1}{n}$. Consider first when $\delta>0$. We can write 
$$\log\left( \frac{\alpha_{ijk}^{-\alpha_{ijk}}}{{\alpha'}_{ijk}^{-\alpha'_{ijk}}} \right) = (\alpha_{ijk} + \delta) \log(\alpha_{ijk}+\delta) - \alpha_{ijk} \log \alpha_{ijk} = \alpha_{ijk} \log\left(1 + \frac{\delta}{\alpha_{ijk}}\right) + \delta \log(\alpha_{ijk} + \delta).$$
Using the fact that $\log(1+x) = x - O(x^2)$ as $x \to 0^+$, and that $\alpha_{ijk}$ is a positive constant, we can bound $$\alpha_{ijk} \log\left(1 + \frac{\delta}{\alpha_{ijk}}\right) \leq \alpha_{ijk} \left( \frac{\delta}{\alpha_{ijk}} + O(\delta^2) \right) \leq O(\delta) = O\left(\frac{1}{n}\right).$$
For large enough $n$, we also have
$$\delta \log(\alpha_{ijk} + \delta) \leq \delta \log(2\alpha_{ijk}) \leq O\left(\frac{1}{n}\right).$$
It follows that $$\frac{\alpha_{ijk}^{-\alpha_{ijk}}}{{\alpha'}_{ijk}^{-\alpha'_{ijk}}} \leq 2^{O(1/n)}.$$ We have $\frac{\alpha_{ijk}^{-\alpha_{ijk}}}{{\alpha'}_{ijk}^{-{\alpha'}_{ijk}}} \geq 1$ since $\delta>0$, which completes the proof for $\delta>0$. The proof for $\delta<0$ is nearly identical. 
\end{proof}

Very similar proofs show that, for $\alpha,\alpha' \in D$ as above, the ratios $\frac{\alpha_{V{\tau}}}{\alpha'_{V_\tau}}$, $\frac{\alpha_B}{\alpha'_B}$, and $\frac{\max_{\beta\in D_\alpha} \beta_N}{\max_{\beta\in D_{\alpha'}} \beta_N}$, are all bounded between $2^{-o(n)}$ and $2^{o(n)}$ as well.

\section{Lemma for Section~\ref{subsec:step3}}\label{app:powermean}

We now prove Lemma~\ref{lem:apppowermean}, which was used in the proof in Section~\ref{subsec:step3} above. To apply it there, $i = 1, \ldots, c$ are the different choices of randomness (i.e., the choices of $w_0, \ldots, w_{2n}$) defining the hash functions.

\begin{lemma} \label{lem:apppowermean}
Suppose $c$ is a positive integer, and $A,B,a_1, \ldots, a_c, b_1, \ldots, b_c$ are positive real numbers such that $\sum_{i=1}^c a_i = c \cdot A$ and $\sum_{i=1}^c b_i = c \cdot B$. Then, there exists an $i \in [c]$ such that $$\frac{a_i^{3/2}}{b_i^{1/2}} \geq \frac{A^{3/2}}{B^{1/2}}.$$
\end{lemma}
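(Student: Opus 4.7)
The plan is to prove a stronger averaging statement, namely that the average of $a_i^{3/2}/b_i^{1/2}$ over $i \in [c]$ is already at least $A^{3/2}/B^{1/2}$, which immediately implies that the maximum (and hence some specific index $i$) achieves this bound. Concretely, I will show
\[
\sum_{i=1}^c \frac{a_i^{3/2}}{b_i^{1/2}} \;\geq\; \frac{\bigl(\sum_i a_i\bigr)^{3/2}}{\bigl(\sum_i b_i\bigr)^{1/2}} \;=\; \frac{(cA)^{3/2}}{(cB)^{1/2}} \;=\; c \cdot \frac{A^{3/2}}{B^{1/2}},
\]
from which the lemma follows by pigeonhole.

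The key tool is H\"older's inequality with the conjugate exponents $p = 3/2$ and $q = 3$ (which satisfy $1/p + 1/q = 1$). I will set $x_i := a_i / b_i^{1/3}$ and $y_i := b_i^{1/3}$, so that $x_i y_i = a_i$, $x_i^{3/2} = a_i^{3/2}/b_i^{1/2}$, and $y_i^3 = b_i$. Applying H\"older then gives
\[
\sum_{i=1}^c a_i \;=\; \sum_{i=1}^c x_i y_i \;\leq\; \Bigl(\sum_{i=1}^c x_i^{3/2}\Bigr)^{2/3}\Bigl(\sum_{i=1}^c y_i^3\Bigr)^{1/3} \;=\; \Bigl(\sum_{i=1}^c \tfrac{a_i^{3/2}}{b_i^{1/2}}\Bigr)^{2/3}\Bigl(\sum_{i=1}^c b_i\Bigr)^{1/3}.
\]
Raising both sides to the power $3/2$ and rearranging isolates $\sum_i a_i^{3/2}/b_i^{1/2}$ on one side, yielding the displayed inequality above.

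There is no real obstacle here: the positivity of all the $a_i, b_i$ ensures H\"older applies without degeneracies, and the exponents $3/2$ and $3$ are chosen precisely to make the factors $a_i^{3/2}/b_i^{1/2}$ and $b_i$ appear. The only micro-step to be careful about is the final pigeonhole: since the average of the $c$ positive quantities $a_i^{3/2}/b_i^{1/2}$ is at least $A^{3/2}/B^{1/2}$, at least one of them must meet or exceed this average, which is exactly the conclusion of the lemma.
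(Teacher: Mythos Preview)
Your proof is correct. The H\"older application with exponents $(3/2,3)$ is exactly right, and it yields the stronger averaging inequality $\sum_i a_i^{3/2}/b_i^{1/2} \geq (\sum_i a_i)^{3/2}/(\sum_i b_i)^{1/2}$, from which the lemma follows by pigeonhole.

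The paper takes a somewhat different route: it argues by contradiction, assuming $a_i^{3/2}/b_i^{1/2} < A^{3/2}/B^{1/2}$ for every $i$, squares and sums to get $\sum_i a_i^3 < c A^3$, and then contradicts this using the power mean inequality $\bigl(\tfrac{1}{c}\sum_i a_i^3\bigr)^{1/3} \geq \tfrac{1}{c}\sum_i a_i = A$. Your approach is more direct and proves a strictly stronger statement (the full averaging bound rather than just existence of a good index), while the paper's argument is arguably more elementary in that it only invokes the cubic--linear power mean comparison on the single sequence $(a_i)$ rather than a two-sequence H\"older inequality. Either way, the underlying convexity is the same, and both arguments are equally short.
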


\begin{proof}
It suffices to prove that \begin{equation}\label{eqtoprove}\sum_{i=1}^c \frac{a_i^3}{b_i} \geq c \cdot \frac{A^3}{B}.\end{equation} Indeed, if (\ref{eqtoprove}) is true, then by the pigeonhole principle, there must exist an $i$ which achieves at least the average value $$\frac{a_i^3}{b_i} \geq \frac{A^3}{B},$$ and hence by taking square roots, $$\frac{a_i^{3/2}}{b_i^{1/2}} \geq \frac{A^{3/2}}{B^{1/2}},$$ as desired.

To prove (\ref{eqtoprove}), we first apply the power mean inequality, then combine this with the Cauchy-Schwarz inequality. The power mean inequality says that
$$\left( \frac{1}{c} \sum_{i=1}^c a_i^{3/2} \right)^{2/3} \geq \frac{1}{c} \sum_{i=1}^c a_i = A,$$
and so by cubing both sides,
\begin{equation}\label{eqpowermean} \left( \frac{1}{c} \sum_{i=1}^c a_i^{3/2} \right)^{2} \geq  A^3. \end{equation}
Next we apply the Cauchy-Schwarz inequality, which says
$$\left( \sum_{i=1}^c \frac{a_i^3}{b_i} \right) \left( \sum_{i=1}^c b_i \right) \geq \left( \sum_{i=1}^c a_i^{3/2} \right)^2.$$
Combining this with (\ref{eqpowermean}) gives
$$\left( \sum_{i=1}^c \frac{a_i^3}{b_i} \right)  \geq \frac{\left( \sum_{i=1}^c a_i^{3/2} \right)^2}{ \sum_{i=1}^c b_i } = \frac{c^2 \cdot \left( \frac{1}{c}\sum_{i=1}^c  a_i^{3/2} \right)^2}{c \cdot B} \geq \frac{c \cdot A^3}{B},$$ as desired.
\end{proof}

\end{document}